\newtheorem{theorem}{Theorem}[section]
 \newtheorem{lemma}[theorem]{Lemma}
 \newtheorem{remark}[theorem]{Remark}
\newtheorem{corollary}[theorem]{Corollary}
\numberwithin{equation}{section}
\begin{document}

\title{On Shortfall Risk Minimization for Game Options}
\thanks{Partially supported by the Israeli Science Foundation under Grant
  160/17.}
\author{Yan Dolinsky \address{
 Department of Statistics, Hebrew University of Jerusalem, Israel. \\
 e.mail: yan.dolinsky@mail.huji.ac.il}
 ${}$\\
  ${}$\\
Hebrew University
 }

\date{\today}
\begin{abstract}
In this paper we study the existence of an optimal hedging strategy for the shortfall risk
measure in the game options setup. We consider the continuous time Black--Scholes
(BS) model. Our first result says that in the case where the game contingent claim
(GCC) can be exercised only on a finite set of times, there exists an optimal strategy.
Our second and main result is an example which demonstrates that for the case where
the GCC can be stopped on the all time interval, optimal portfolio strategies need not
always exist.
\end{abstract}

\subjclass[2010]{91G10, 91G20, 93E20}
 \keywords{complete market, game options, shortfall risk, stochastic optimal control}%

\maketitle
\markboth{Y.Dolinsky}{Shortfall Risk Minimization for Game Options}
\renewcommand{\theequation}{\arabic{section}.\arabic{equation}}
\pagenumbering{arabic}

\section{Introduction}\label{sec:1}\setcounter{equation}{0}
A game contingent claim (GCC) or game option, which was introduced in
\cite{Ki1}, is defined
as a contract between the seller and the buyer of the option such
that both have the right to exercise it at any time up to a
maturity date (horizon) $T$. If the buyer exercises the contract
at time $t$ then he receives the payment $Y_t$, but if the seller
exercises (cancels) the contract before the buyer then the latter
receives $X_t$. The difference $\Delta_t=X_t-Y_t$ is the penalty
which the seller pays to the buyer for the contract cancellation.
In short, if the seller will exercise at a stopping time
$\sigma\leq{T}$ and the buyer at a stopping time $\tau\leq{T}$
then the former pays to the latter the amount $H(\sigma,\tau)$
where
\begin{equation}\label{1.1}
H(\sigma,\tau):=X_{\sigma}\mathbb{I}_{\sigma<\tau}+Y_{\tau}\mathbb{I}_{\tau\leq{\sigma}}
\end{equation}
and we set $\mathbb{I}_{Q}=1$ if an event $Q$ occurs and
$\mathbb{I}_{Q}=0$ if not.

A hedge (for the seller) against a GCC is defined here as a pair
$(\pi,\sigma)$
 which
consists of a self--financing strategy
$\pi$
and a stopping time $\sigma$ which is the cancellation time
for the seller. A hedge is called perfect if no matter what exercise time the buyer
chooses, the seller can cover his liability to the buyer (with probability $1$). The
option price is defined as the minimal initial capital which is required for a
perfect hedge. Recall (see \cite{Ki1}) that pricing a GCC in a complete market leads to the value of
a zero sum optimal stopping (Dynkin’s) game under the unique martingale measure.
For additional information about pricing of game options
see, for instance \cite{E,H,HZ,K,KK,KK1}.

In real market conditions an investor (seller) may not be willing for various reasons
to tie in a hedging portfolio the full initial capital required for a perfect hedge. In this
case the seller is ready to accept a risk that his portfolio value at an exercise time may
be less than his obligation to pay and he will need additional funds to fulfill the contract.

We consider the shortfall risk measure which is given by (see \cite{DK1})
$$R(\pi,\sigma):=\sup_{\tau}\mathbb E_{\mathbb P}\left[\left(H(\sigma,\tau)-V^{\pi}_{\sigma\wedge\tau}\right)^{+}\right]$$
where $\{V^{\pi}_t\}_{t=0}^T$ is the wealth process of the portfolio strategy $\pi$ and $\mathbb E_{\mathbb P}$
denotes the expectation
with respect to the market measure.
The supremum is taken over all exercise times of the buyer.

A natural question to ask, is whether for a given initial capital there exists a hedging
strategy which minimizes the shortfall risk (an optimal hedge). For American options the
existence of optimal hedging strategy is proved by applying the Komlós lemma and relies
heavily on the fact that the shortfall risk measure is a convex functional of the wealth
process (see \cite{M,P}).
For the game options setup, the shortfall risk measure, as a functional of the
wealth process is given by
\begin{equation}\label{1.2}
R(\pi):=\inf_{\sigma}
\sup_{\tau}\mathbb E_{\mathbb P}\left[\left(H(\sigma,\tau)-V^{\pi}_{\sigma\wedge\tau}\right)^{+}\right].
\end{equation}
This functional is not necessarily convex (because of the $\inf$) and so the Komlós lemma
can not be applied here.

In this paper we treat the simplest complete, continuous time model, namely the
Black--Scholes (BS) model. Our first result (Theorem \ref{thm2.1}) which is proved in the next
section says that for the case where the option can be exercised only on a finite set of
times, there exists an optimal hedging strategy. The proof is based on the dynamical
programming approach and the randomization technique developed in \cite{R,RR}.
Up to date there are several existence results for risk minimization in the game options setup (see
\cite{DK1,DK2} and Section 5.2 in \cite{Ki2}).
The above papers treat essentially discrete time trading and due to \textit{admissability} conditions
the trading strategies are compact.
In the current setup
trading is done continuously, and so it requires a new method of proof.

In Section 3 we provide the main result of the paper (Theorem \ref{thm3.1}). This is an
example which demonstrates that for the case where the GCC can be stopped on the
whole time interval, optimal portfolio strategies need not always exist.
We combine the
machinery developed in \cite{KW}
with additional ideas which allow to treat the shortfall risk
measure for game options. Formally, we show that the $\inf$ in (\ref{1.2})
which ruins the convexity leads to non existence
of optimal hedging strategies.

\section{Existence Result}
\label{sec:2}\setcounter{equation}{0}
Consider a complete probability space
$(\Omega, \mathcal{F}, \mathbb P)$ together with a standard
one--dimensional Brownian motion
$\{W_t\}_{t=0}^\infty$, and the filtration
$\mathcal{F}_t=\sigma{\{W_s|s\leq{t}\}}$ completed by the null sets.
We consider a simple BS financial market with time horizon $T<\infty$,
which
consists of a riskless savings account bearing zero interest (for simplicity) and of a risky
asset $S$, whose value at time $t$ is given by
\begin{equation*}
S_t=S_0\exp\left(\kappa W_t+(\vartheta-\kappa^2/2) t\right), \ \ t\in [0,T]
\end{equation*}
where $S_0,\kappa>0$ and $\vartheta\in\mathbb R$ are constants.

Define the exponential martingale
\begin{equation}\label{2.1}
Z_t:=\exp\left(-\frac{\vartheta}{\kappa} W_t-\frac{\vartheta^2}{2\kappa^2} t\right),  \ \ t\in [0,T].
\end{equation}
From the Girsanov theorem it follows that the probability measure $\mathbb Q$ which is given by
\begin{equation}\label{2.1+}
\frac{d\mathbb Q}{d\mathbb P} {|\mathcal F_t}:=Z_t, \ \ t\in [0,T]
\end{equation}
is the unique martingale measure for the risky asset $S$.

Next,
let
$\mathbb T:=\{0=T_0<T_1<...<T_n=T\}$ be a finite set of a deterministic times.
Consider a game option
that can be exercised on the set $\mathbb T$. Denote by $\mathcal T_{\mathbb T}$ the set of all stopping times with values
in $\mathbb T$.
For any $k=0,1,...,n$ the payoffs at time $T_k$ are
path--independent and given by
$
Y_{T_k}=f_k(S_{T_k})$ and
$X_{T_k}=g_k(S_{T_k})$
where $f_k,g_k:(0,\infty)\rightarrow \mathbb R$ are measurable functions and $f_k\leq g_k$. The payoff function $H$ is given by (\ref{1.1}).
We will assume the following integrability condition
\begin{equation}\label{2.bound}
\mathbb E_{\mathbb P}[X_{T_k}]<\infty, \ \ k=0,1,...,n.
\end{equation}

A portfolio strategy with an initial capital $x\geq 0$ is a pair $\pi=(x,\gamma)$ such that
$\gamma=\{\gamma_t\}_{t=0}^T$ is a predictable $S$--integrable process and the corresponding wealth process
$$V^\pi_{t}:=x+\int_{0}^t \gamma_u dS_u, \ \ t\in [0,T]$$
satisfies the \textit{admissibility} condition $V^\pi_{t}\geq 0$ a.s. for all $t$.

Let us recall some elementary properties that will be used in the sequel (for details see Chapters IV-V in \cite{RY}).
The continuity of $S$ implies that the wealth process $\{V^\pi_t\}_{t=0}^T$
is continuous as well. Moreover, since $\{S_t\}_{t=0}^T$ is a $\mathbb Q$--martingale then
the wealth process $\{V^\pi_t\}_{t=0}^T$ is a $\mathbb Q$--local martingale, and so from
the \textit{admissibility} condition
 we get that $\{V^{\pi}_t\}_{t=0}^T$
 is a $\mathbb Q$--super martingale. On the other hand, due to the martingale representation theorem,
 for any nonnegative $\mathbb Q$--martingale $\{M_t\}_{t=0}^T$ there exists a portfolio strategy
$\pi$ such that $V^{\pi}_t=M_t$ a.s. for all $t$.

For any $x\geq 0$ denote by $\mathcal A(x)$ the set of all portfolio strategies
with an initial capital $x$.
A hedging strategy with an initial capital $x$ is a pair
$(\pi,\sigma)\in\mathcal A(x)\times\mathcal T_{\mathbb T}$.

The shortfall risk measure is given by
\begin{eqnarray*}
&R_{\mathbb T}(\pi,\sigma):=\sup_{\tau\in\mathcal T_{\mathbb T}}\mathbb E_{\mathbb P}\left[\left(H(\sigma,\tau)-V^{\pi}_{\sigma\wedge\tau}\right)^{+}\right], \ \ (\pi,\sigma)\in\mathcal A(x)\times\mathcal T_{\mathbb T},\\
&R_{\mathbb T}(x):=\inf_{(\pi,\sigma)\in\mathcal A(x)\times \mathcal T_{\mathbb T}}R_{\mathbb T}(\pi,\sigma).
\end{eqnarray*}
Now, we ready to formulate our first result.
\begin{theorem}\label{thm2.1}
For any $x\geq 0$ there exists a hedging strategy $(\hat\pi,\hat\sigma)\in \mathcal A(x)\times \mathcal T_{\mathbb T}$ such that
$$R_{\mathbb T}(\hat\pi,\hat\sigma)=R_{\mathbb T}(x).$$
\end{theorem}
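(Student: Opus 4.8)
The strategy is to reduce the optimization over the infinite-dimensional set $\mathcal A(x)\times\mathcal T_{\mathbb T}$ to a finite sequence of one-step optimization problems via dynamical programming, handling the cancellation time $\sigma$ and the buyer's time $\tau$ simultaneously inside the recursion. Because the wealth process of an admissible portfolio is exactly a nonnegative $\mathbb Q$-supermartingale (and conversely every nonnegative $\mathbb Q$-martingale is attainable), I would first recast the problem purely in terms of a nonnegative $\mathbb Q$-supermartingale $\{V_{T_k}\}_{k=0}^n$ with $V_0\le x$, forgetting $\gamma$; the existence of the optimizing martingale then follows from the martingale representation theorem at the very end. With $V_{T_k}$ replaced by its Radon--Nikodym-weighted version, the constraint $\mathbb E_{\mathbb Q}[V_{T_{k+1}}\mid\mathcal F_{T_k}]\le V_{T_k}$ becomes a transparent one-step budget constraint, and since the payoffs are path-independent functions of $S_{T_k}$ one expects the value function to depend only on the pair $(S_{T_k},V_{T_k})$.

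The key step is to set up the Bellman recursion. Define, for $k=n,n-1,\dots,0$, a function $J_k(s,v)$ representing the minimal residual shortfall risk from time $T_k$ onward given current stock price $s$ and current wealth $v$, where the seller has not yet cancelled. At the terminal time one sets $J_n(s,v)=(f_n(s)-v)^+$. At an intermediate time the seller chooses either to cancel now — incurring $(g_k(s)-v)^+$ and then facing the buyer's best response, but note once the seller cancels the contract terminates so this branch contributes just $(g_k(s)-v)^+$ up to the usual game-payoff bookkeeping — or to continue, in which case the buyer may exercise now giving $(f_k(s)-v)^+$, or the game proceeds to $T_{k+1}$ with the seller choosing a conditional redistribution of wealth $v$ into an $\mathcal F_{T_{k+1}}$-measurable nonnegative random variable with $\mathbb Q$-conditional mean at most $v$. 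Thus $J_k$ is obtained by an $\inf$ over this one-step "splitting" of $v$ of the conditional expectation (under $\mathbb P$, i.e. with the density $Z$) of $J_{k+1}$, combined with the appropriate $\inf$/$\sup$ for the stopping decisions. Here is where the randomization technique of \cite{R,RR} enters: to get existence and measurability of the optimal one-step splitting and to linearize the inner min--max, one enlarges the probability space (or passes to randomized stopping rules / relaxed controls) so that the one-step problem becomes an optimization of a continuous linear functional over a set of measures that is compact in a suitable weak topology. One must check lower semicontinuity of $J_{k+1}$ (pushed back through the recursion) and tightness of the candidate minimizing sequences of conditional laws, using the integrability hypothesis \eqref{2.bound} to control the $(g_k-v)^+$ and $(f_k-v)^+$ terms and to ensure uniform integrability so that the infimum is attained.

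Running the recursion down to $k=0$ produces $J_0(S_0,v)$ with $J_0(S_0,x)=R_{\mathbb T}(x)$, and the collection of one-step optimal splittings pieces together — by a standard measurable-selection argument — into an optimal nonnegative $\mathbb Q$-supermartingale $\{\hat V_{T_k}\}$ with $\hat V_0\le x$, which we may take to be a martingale (spending the full capital can only help, or at worst is harmless, so replace it by its Doob--Meyer martingale part plus a cash buffer and then top up to $x$); the martingale representation theorem yields $\hat\gamma$ and hence $\hat\pi\in\mathcal A(x)$. The optimal cancellation time $\hat\sigma$ is read off as the first $T_k$ at which the "cancel now" branch attains the minimum in the Bellman equation, which is a stopping time in $\mathcal T_{\mathbb T}$ by construction. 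I expect the main obstacle to be the existence/attainment in the one-step problem: the inner $\inf_\sigma\sup_\tau$ destroys convexity, so one cannot simply invoke the Komlós lemma as in the American case, and instead must lean on the compactness coming from randomization together with careful semicontinuity bookkeeping of $J_k$ in the variable $v$ — in particular verifying that the relaxed one-step value is actually achieved by a non-randomized splitting, so that the reconstructed $\hat\pi$ is a genuine (non-randomized) portfolio strategy.
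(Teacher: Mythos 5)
Your overall architecture coincides with the paper's: recast admissible wealth processes as nonnegative $\mathbb Q$-supermartingales, run a backward recursion over the grid $\mathbb T$ in the pair (stock price, wealth) with the seller's and buyer's stopping decisions absorbed as a pointwise $\min$/$\max$ (the standard Dynkin-game recursion), optimize the one-step redistribution of wealth subject to the conditional budget constraint $\mathbb E_{\mathbb Q}(\Theta_2\mid\mathcal F_{t_1})\le\Theta_1$, and recover the portfolio at the end by martingale representation. The gap is that the one point you yourself flag as ``the main obstacle'' --- attainment of the one-step infimum for a non-convex integrand (non-convex because the $\min$ with the cancellation payoff destroys convexity in the wealth variable), together with the fact that the optimizer can be taken non-randomized and adapted to the Brownian filtration --- is precisely the content of the paper's key lemma, and your proposal does not supply it. Appealing to compactness of relaxed controls in a weak topology of measures is not enough: even granting tightness and the semicontinuity bookkeeping (neither of which you verify, and the budget constraint is only Fatou-closed under weak limits), you would at best obtain a randomized splitting on an enlarged space, and a randomized wealth is not replicable by a strategy in $\mathcal A(x)$. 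You acknowledge that one must ``verify that the relaxed value is achieved by a non-randomized splitting'' but give no argument; also, the randomization in the cited technique is not used to ``linearize the inner min--max'' (that is handled by the pointwise Dynkin recursion), but to attain the non-convex one-step infimum.

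The paper's resolution is concrete. First, the one-step value is unchanged when the integrand $B(\cdot,y)$ is replaced by its convex envelope $B^c(\cdot,y)$. For the convexified problem a minimizing sequence is upgraded, via Delbaen--Schachermayer's Komlós-type Lemma~A1.1, to an a.s.-convergent sequence of convex combinations whose limit $\Lambda$ is feasible (Fatou) and optimal (convexity plus dominated convergence). Then, on each interval $(a_n(y),b_n(y))$ where $B<B^c$ the envelope is affine, so $\Lambda$ is replaced by a two-point randomization onto the endpoints $a_n(S_{t_2})$ and $b_n(S_{t_2})$ with conditional probabilities chosen to preserve the conditional $\mathbb Q$-expectation; this keeps feasibility and forces $B(\Theta,S_{t_2})=B^c(\Theta,S_{t_2})$, so $\Theta$ is optimal for $B$ itself. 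Crucially, the randomizing noise is manufactured from Brownian increments on $(t_1,t_2)$ orthogonal to $W_{t_2}$ (the auxiliary Gaussians $\Gamma$ and $\hat\Gamma$), so no enlargement of the probability space or filtration is needed and the resulting wealth is attainable by a genuine portfolio. Without this convexification-plus-filtration-internal-randomization step, or an equivalent substitute, your proof does not close.
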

\subsection{Proof of Theorem 2.1}
We start with some preparations.
Let $U:[0,\infty)\times (0,\infty)\rightarrow\mathbb R$ be a measurable function such that for
any $y>0$, $U(\cdot,y)$ is a bounded, nondecreasing and continuous function.
Let $U_c:[0,\infty)\times(0,\infty)\rightarrow\mathbb R$ be the concave envelop of $U$ with respect to the first variable.
Namely, for any $y>0$
the function $U_c(\cdot,y)$ is the minimal concave function which satisfies
$U_c(\cdot,y)\geq U(\cdot,y)$.
Clearly, $U_c$ is continuous in the first variable.
Thus, for any $y>0$ the set $\{x: \ U(x,y)<U_c(x,y)\}$ is open
and so can be written as a countable union of disjoint intervals
\begin{equation}\label{2.4}
\{x: \ U(x,y)<U_c(x,y)\}=\bigcup_{n\in\mathbb N}(a_n(y),b_n(y)).
\end{equation}
From Lemma 2.8 in \cite{R} it follows
 that $U_c(\cdot,y)$ is affine on each of the intervals
 $(a_n(y),b_n(y))$.
 Since $U,U_c$ are continuous in the first variable,
then the functions $a_n,b_n:(0,\infty)\rightarrow\mathbb R_{+}$, $n\in\mathbb N$
are determined by the countable collection of functions $U(q,\cdot),U_c(q,\cdot):
(0,\infty)\rightarrow\mathbb R$ for nonnegative rational $q$,
and so they are measurable.

For any $0\leq t_1<t_2\leq T$ and a $\mathcal F_{t_1}$ measurable random variable
$\Theta_1\geq 0$
denote by $\mathcal H_{t_1,t_2}(\Theta_1)$ the set of all random variables
$\Theta_2\geq 0$ which are $\mathcal{F}_{t_2}$ measurable and satisfy
$\mathbb E_{\mathbb Q}(\Theta_2|\mathcal F_{t_1})\leq \Theta_1.$

The following auxiliary result is an extension of Theorem 5.1 in \cite{R}.
\begin{lemma}\label{lem.main}
 Let $0\leq t_1<t_2\leq T$ and let $\Theta_1\geq 0$ be a $\mathcal F_{t_1}$ measurable random variable.
 For a function $U$ as above, assume that there exists a function $G:\mathbb R\rightarrow\mathbb R$ such that
 $|U(x,y)|\leq G(y)$ for all $x,y$ and $\mathbb E_{\mathbb P}[G(S_{t_2})]<\infty$.
 Then there exists a random variable $\Theta\in \mathcal H_{t_1,t_2}(\Theta_1)$ such that
\begin{eqnarray*}
&\mathbb E_{\mathbb P}\left[U(\Theta,S_{t_2})|\mathcal F_{t_1}\right]\\
&=ess\sup_{\Theta_2\in \mathcal H_{t_1,t_2}(\Theta_1)}\mathbb E_{\mathbb P}\left[U(\Theta_2,S_{t_2})|\mathcal F_{t_1}\right]\\
&=ess\sup_{\Theta_2\in \mathcal H_{t_1,t_2}(\Theta_1)}\mathbb E_{\mathbb P}\left[U_c(\Theta_2,S_{t_2})|\mathcal F_{t_1}\right].
\end{eqnarray*}
\end{lemma}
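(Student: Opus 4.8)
The plan is: (A) reduce to controls that are measurable with respect to $\mathcal G:=\mathcal F_{t_1}\vee\sigma(W_{t_2})$ and dispose of the easy inequalities; (B) attain the ``concavified'' optimum by a $\mathcal G$--measurable $\Theta^c\in\mathcal H_{t_1,t_2}(\Theta_1)$; (C) randomize $\Theta^c$ onto the endpoints of the intervals in (\ref{2.4}) to obtain a $\Theta\in\mathcal H_{t_1,t_2}(\Theta_1)$ with $\mathbb E_{\mathbb P}[U(\Theta,S_{t_2})|\mathcal F_{t_1}]$ equal to that optimum, and conclude. Write $V:=ess\sup_{\Theta}\mathbb E_{\mathbb P}[U_c(\Theta,S_{t_2})|\mathcal F_{t_1}]$, the supremum being over $\mathcal G$--measurable $\Theta\in\mathcal H_{t_1,t_2}(\Theta_1)$. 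For step (A): in the Black--Scholes model both $S_{t_2}$ and the conditional density $Z_{t_2}/Z_{t_1}$ are functions of $W_{t_2}-W_{t_1}$ given $\mathcal F_{t_1}$, hence $\mathcal G$--measurable; since moreover $Z_{t_2}$ is $\mathcal G$--measurable, $\mathbb P$ and $\mathbb Q$ have the same regular conditional law given $\mathcal G$. So for $\Theta_2\in\mathcal H_{t_1,t_2}(\Theta_1)$ the variable $\widehat\Theta_2:=\mathbb E_{\mathbb P}(\Theta_2|\mathcal G)=\mathbb E_{\mathbb Q}(\Theta_2|\mathcal G)\ge0$ is $\mathcal G$--measurable, again in $\mathcal H_{t_1,t_2}(\Theta_1)$, and by conditional Jensen (with the $\mathcal G$--measurable ``parameter'' $S_{t_2}$ and $U_c(\cdot,y)$ concave), $\mathbb E_{\mathbb P}[U_c(\Theta_2,S_{t_2})|\mathcal F_{t_1}]\le\mathbb E_{\mathbb P}[U_c(\widehat\Theta_2,S_{t_2})|\mathcal F_{t_1}]\le V$; since $U\le U_c$ this gives $ess\sup_{\Theta_2}\mathbb E_{\mathbb P}[U(\Theta_2,S_{t_2})|\mathcal F_{t_1}]\le ess\sup_{\Theta_2}\mathbb E_{\mathbb P}[U_c(\Theta_2,S_{t_2})|\mathcal F_{t_1}]=V$. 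I also record for later that, since $U(\cdot,y)$ is bounded, $U_c(\cdot,y)$ is nondecreasing, continuous and $-G(y)\le U\le U_c\le G(y)$; that $U_c(0,y)=U(0,y)$ (as $U(\cdot,y)$ is nondecreasing); and that, by Lemma 2.8 in \cite{R}, $U_c(\cdot,y)$ is affine on each interval $(a_n(y),b_n(y))$ of (\ref{2.4}), forcing these intervals to be bounded with $U(\cdot,y)=U_c(\cdot,y)$ at their endpoints (a component $(a,\infty)$ would make $U_c(\cdot,y)$ constant $=\sup_x U(x,y)$ there, hence $U=U_c$ there by monotonicity, a contradiction).

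For step (B): among $\mathcal G$--measurable $\Theta\in\mathcal H_{t_1,t_2}(\Theta_1)$ the family $\mathbb E_{\mathbb P}[U_c(\Theta,S_{t_2})|\mathcal F_{t_1}]$ is upward directed (paste two candidates along the $\mathcal F_{t_1}$--event where one dominates; the result stays $\mathcal G$--measurable and in $\mathcal H_{t_1,t_2}(\Theta_1)$), so pick $\Theta^{(k)}$ with $\mathbb E_{\mathbb P}[U_c(\Theta^{(k)},S_{t_2})|\mathcal F_{t_1}]\uparrow V$ a.s. Since $\mathbb E_{\mathbb Q}(\Theta^{(k)}|\mathcal F_{t_1})\le\Theta_1<\infty$ a.s., a conditional Koml\'os argument (localize on $\{\Theta_1\le m\}$, apply Koml\'os under $\mathbb Q$, diagonalize in $m$) produces $\mathcal G$--measurable convex combinations $\widetilde\Theta^{(k)}$ of $\{\Theta^{(j)}:j\ge k\}$ converging a.s. to some $\mathcal G$--measurable $\Theta^c\ge0$; conditional Fatou gives $\mathbb E_{\mathbb Q}(\Theta^c|\mathcal F_{t_1})\le\Theta_1$, and concavity and continuity of $U_c(\cdot,S_{t_2})$ together with the upper bound $G(S_{t_2})$ and reverse conditional Fatou give $\mathbb E_{\mathbb P}[U_c(\Theta^c,S_{t_2})|\mathcal F_{t_1}]\ge\limsup_k\mathbb E_{\mathbb P}[U_c(\widetilde\Theta^{(k)},S_{t_2})|\mathcal F_{t_1}]\ge V$, so it equals $V$. (Alternatively one may take $\Theta^c=I(\lambda\,Z_{t_2}/Z_{t_1},S_{t_2})$ with $I(z,y)$ a measurable selection of $\arg\max_{x\ge0}\{U_c(x,y)-zx\}$ and $\lambda$ an $\mathcal F_{t_1}$--measurable multiplier making the budget bind; being a function of $(\lambda,Z_{t_2}/Z_{t_1},S_{t_2})$ this is automatically $\mathcal G$--measurable.)

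For step (C) and the conclusion: the bridge $B_s:=W_s-W_{t_1}-\tfrac{s-t_1}{t_2-t_1}(W_{t_2}-W_{t_1})$, $s\in[t_1,t_2]$, is $\mathcal F_{t_2}$--measurable and, under $\mathbb P$, independent of $\mathcal G$ — hence, by the equality of conditional laws in (A), also under $\mathbb Q$; let $\xi$ be a $[0,1]$--uniform $\mathcal F_{t_2}$--measurable function of $B$ (say a Gaussian c.d.f. of $B_{(t_1+t_2)/2}$) independent of $\mathcal G$. Put $\Theta:=\Theta^c$ on $\{U(\Theta^c,S_{t_2})=U_c(\Theta^c,S_{t_2})\}$, and on each $\mathcal G$--measurable event $E_n:=\{\Theta^c\in(a_n(S_{t_2}),b_n(S_{t_2}))\}$ set $\beta_n:=\frac{b_n(S_{t_2})-\Theta^c}{b_n(S_{t_2})-a_n(S_{t_2})}\in(0,1)$ (a $\mathcal G$--measurable r.v., since $\Theta^c,S_{t_2}$ are) and $\Theta:=a_n(S_{t_2})\mathbb I_{\{\xi\le\beta_n\}}+b_n(S_{t_2})\mathbb I_{\{\xi>\beta_n\}}$. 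Then $\Theta\ge0$ is $\mathcal F_{t_2}$--measurable; since $\xi$ is uniform and independent of $\mathcal G$ while every other datum is $\mathcal G$--measurable, $\mathbb E_{\mathbb Q}(\Theta|\mathcal G)=\Theta^c$ (on $E_n$, $\beta_n a_n(S_{t_2})+(1-\beta_n)b_n(S_{t_2})=\Theta^c$), whence $\mathbb E_{\mathbb Q}(\Theta|\mathcal F_{t_1})=\mathbb E_{\mathbb Q}(\Theta^c|\mathcal F_{t_1})\le\Theta_1$, i.e. $\Theta\in\mathcal H_{t_1,t_2}(\Theta_1)$; and, using $U=U_c$ at $a_n(S_{t_2}),b_n(S_{t_2})$ and affineness of $U_c(\cdot,S_{t_2})$ on $(a_n(S_{t_2}),b_n(S_{t_2}))$, $\mathbb E_{\mathbb P}(U(\Theta,S_{t_2})|\mathcal G)=U_c(\Theta^c,S_{t_2})$ everywhere. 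Taking $\mathbb E_{\mathbb P}(\cdot|\mathcal F_{t_1})$ — legitimate as everything is dominated by $G(S_{t_2})\in L^1(\mathbb P)$ — yields $\mathbb E_{\mathbb P}(U(\Theta,S_{t_2})|\mathcal F_{t_1})=\mathbb E_{\mathbb P}(U_c(\Theta^c,S_{t_2})|\mathcal F_{t_1})=V$. Combined with (A), which gives $\mathbb E_{\mathbb P}(U(\Theta,S_{t_2})|\mathcal F_{t_1})\le ess\sup_{\Theta_2}\mathbb E_{\mathbb P}(U(\Theta_2,S_{t_2})|\mathcal F_{t_1})\le V$, all three objects in the statement coincide and equal $V=\mathbb E_{\mathbb P}(U(\Theta,S_{t_2})|\mathcal F_{t_1})$, which is the assertion.

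The crux — and the step I expect to require the most care — is the realization underlying (A): one must reduce to $\mathcal G$--measurable controls \emph{before} randomizing, so that a single independent coin $\xi$ can serve simultaneously for the $\mathbb Q$--budget constraint and the $\mathbb P$--objective; this hinges on the Black--Scholes feature that both the terminal asset price and the pricing density are $\sigma(W_{t_2})$--measurable given $\mathcal F_{t_1}$, and it is exactly the ``$\inf$ over stopping'' being absent here that makes it work. The remaining technical load is the existence of $\Theta^c$ in (B) — either the conditional Koml\'os compactness with the $\limsup$/conditional--expectation interchange, or, in the Lagrangian route, the measurable selection of $I(z,y)$ together with an $\mathcal F_{t_1}$--measurable budget--binding multiplier — where the envelope $|U|\le G$ with $\mathbb E_{\mathbb P}[G(S_{t_2})]<\infty$ is precisely what makes the limiting arguments converge, plus the routine measurability bookkeeping around (\ref{2.4}) and Lemma 2.8 in \cite{R}.
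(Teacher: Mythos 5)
Your proof is correct, and it reaches the paper's conclusion by a genuinely different route in two of its three stages. The paper also extracts an optimizer $\Lambda$ of the concavified problem via convex combinations (Lemma~A1.1 of Delbaen--Schachermayer) and then randomizes onto the endpoints $a_n(S_{t_2}),b_n(S_{t_2})$, exactly as you do in step~(C); but where you first project every control onto $\mathcal G=\mathcal F_{t_1}\vee\sigma(W_{t_2})$ by conditional Jensen (using that $S_{t_2}$ and $Z_{t_2}$ are $\mathcal G$--measurable, so $\mathbb E_{\mathbb P}(\cdot\,|\mathcal G)=\mathbb E_{\mathbb Q}(\cdot\,|\mathcal G)$ and the concavified objective can only improve), the paper keeps the raw limit $\Lambda$ and instead invokes Skorohod's representation theorem to replace it, in joint law with $\bigl(W_{[0,t_1]},W_{t_2}\bigr)$, by a function $\Phi\bigl(W_{[0,t_1]},W_{t_2},\Gamma\bigr)$ of an auxiliary Gaussian $\Gamma$ orthogonal to $W_{t_2}-W_{t_1}$; a second orthogonal Gaussian $\hat\Gamma$ then plays the role of your bridge variable $\xi$. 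Your Jensen reduction is the more elementary device: it dispenses with the Skorohod step altogether and makes explicit \emph{why} restricting to $\mathcal G$--measurable controls costs nothing for the concavified value, at the price of a conditional Jensen inequality with the random parameter $S_{t_2}$ and a directedness argument inside the $\mathcal G$--measurable class; the paper's route avoids those but needs the representation theorem to manufacture room for an independent coin inside $\mathcal F_{t_2}$. Both arguments rest on the same two structural facts, which you correctly identify as the crux: the pricing density and the terminal price are $\sigma(W_{t_2})$--measurable given $\mathcal F_{t_1}$, and $\mathcal F_{t_2}$ contains randomness independent of them. Two small remarks: your parenthetical ruling out an unbounded component of $\{U<U_c\}$ is right but terse --- the clean statement is that $U_c$ constant $=\sup_xU(x,y)$ on $(a,\infty)$ forces $U(a,y)=\sup_xU(x,y)$ at the endpoint $a\notin\{U<U_c\}$, whence $U\equiv U_c$ on $[a,\infty)$ by monotonicity, contradicting $(a,\infty)\subset\{U<U_c\}$; and your ``conditional Koml\'os with localization'' can be replaced by a direct appeal to Lemma~A1.1 of Delbaen--Schachermayer, which applies to any nonnegative sequence, with conditional Fatou under $\mathbb Q$ giving finiteness of the limit, exactly as in the paper.
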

\begin{proof}
Since $U_c\geq U$, it is sufficient to show that there exists
$\Theta\in \mathcal H_{t_1,t_2}(\Theta_1)$ such that
\begin{eqnarray*}
&\mathbb E_{\mathbb P}\left[U(\Theta,S_{t_2})|\mathcal F_{t_1}\right]\\
&=ess\sup_{\Theta_2\in \mathcal H_{t_1,t_2}(\Theta_1)}\mathbb E_{\mathbb P}\left[U_c(\Theta_2,S_{t_2})|\mathcal F_{t_1}\right].
\end{eqnarray*}
Choose a sequence $\Theta^{(n)}\in \mathcal H_{t_1,t_2}(\Theta_1)$, $n\in\mathbb N$
such that
\begin{equation}\label{2.5}
\lim_{n\rightarrow\infty}\mathbb E_{\mathbb P}\left[U_c(\Theta^{(n)},S_{t_2})|\mathcal F_{t_1}\right]=ess\sup_{\Theta_2\in \mathcal H_{t_1,t_2}(\Theta_1)}\mathbb E_{\mathbb P}\left[U_c(\Theta_2,S_{t_2})|\mathcal F_{t_1}\right].
\end{equation}
From Lemma~A1.1 in
\cite{DelbSch:94} we obtain a sequence
$\Lambda^{(m)}\in conv(\Theta^{(m)},\Theta^{(m+1)},...)$, $m\in\mathbb N$
converging $\mathbb P$ a.s. to a random variable $\Lambda$. The Fatou lemma implies that
$\Lambda\in \mathcal H_{t_1,t_2}(\Theta_1)$.

By applying the dominated convergence theorem, the inequality $|U_c(\cdot,S_{t_2})|\leq G(S_{t_2})$ and the fact that $U_c$ is concave and continuous
in the first variable we obtain
\begin{eqnarray*}
&\mathbb E_{\mathbb P}\left[U_c(\Lambda,S_{t_2})|\mathcal F_{t_1}\right]\\
&=\lim_{n\rightarrow\infty}\mathbb E_{\mathbb P}\left[U_c(\Lambda^{(n)},S_{t_2})|\mathcal F_{t_1}\right]\\
&\geq \lim_{n\rightarrow\infty}\mathbb E_{\mathbb P}\left[U_c(\Lambda^{(n)},S_{t_2})|\mathcal F_{t_1}\right].
\end{eqnarray*}
This together with (\ref{2.5}) gives
\begin{equation}\label{2.6}
\mathbb E_{\mathbb P}\left[U_c(\Lambda,S_{t_2})|\mathcal F_{t_1}\right]=
ess\sup_{\Theta_2\in \mathcal H_{t_1,t_2}(\Theta_1)}\mathbb E_{\mathbb P}\left[U_c(\Theta_2,S_{t_2})|\mathcal F_{t_1}\right].
\end{equation}

Next, introduce the normal random variable
$$\Gamma:=(W_{\frac{t_1+t_2}{2}}-W_{t_1})-\frac{1}{2} (W_{t_2}-W_{t_1}).$$
Observe that $\mathbb E_{\mathbb P}[\Gamma W_{t_2}]=0$ and so we conclude that
$\Gamma$ is independent of the $\sigma$--algebra generated by $W_t, t\in [0,t_1]\cup\{t_2\}$.
From Theorem 1 in \cite{S} it follows that there exists a measurable function
$\Phi:C[0,t_1]\times \mathbb R^2\rightarrow\mathbb R$ such that we have the following equality
of the joint laws
$$\left(\left(W_{[0,t_1]},W_{t_2},\Lambda\right);\mathbb P\right)=\left(\left(W_{[0,t_1]},W_{t_2},\Phi\left(W_{[0,t_1]},W_{t_2},\Gamma\right)\right);\mathbb P\right).$$
In particular $\Phi\left(W_{[0,t_1]},W_{t_2},\Gamma\right)\in \mathcal H_{t_1,t_2}(\Theta_1)$
and
$$\mathbb E_{\mathbb P}\left[U_c\left(\Lambda,S_{t_2}\right)|\mathcal F_{t_1}\right]=\mathbb E_{\mathbb P}\left[U_c\left(\Phi\left(W_{[0,t_1]},W_{t_2},\Gamma\right),S_{t_2}\right)|\mathcal F_{t_1}\right].$$
Thus, without loss of generality we assume that
$\Lambda=\Phi\left(W_{[0,t_1]},W_{t_2},\Gamma\right).$

We arrive to the final step of the proof. Introduce the normal random variable
$$\hat \Gamma:=(W_{\frac{t_1+2t_2}{3}}-W_{t_1})-\frac{2}{3}(W_{t_2}-W_{t_1})-\frac{2}{3}\Gamma.$$
Observe that $\mathbb E_{\mathbb P} [\hat \Gamma W_{t_2}]=\mathbb E_{\mathbb P} [\hat \Gamma \Gamma]=0.$
Thus,
$\hat\Gamma$ is independent of the $\sigma$--algebra generated by $W_t, t\in [0,t_1]\cup\{t_2\}$ and $\Gamma$.
Let $F^{-1}$ be the inverse function of the cumulative distribution function
$F(\cdot):=\mathbb P(\hat\Gamma\leq \cdot)$. Recall (\ref{2.4}) and
define the random variable
\begin{eqnarray*}
&\Theta:=\Lambda\mathbb I_{\Lambda\notin \bigcup_{n\in\mathbb N}(a_n(S_{t_2}),b_n(S_{t_2}))}\\
&+\sum_{n\in\mathbb N}b_n(S_{t_2})\mathbb I_{\Lambda\in (a_n(S_{t_2}),b_n(S_{t_2}))}\mathbb I_{\hat\Gamma<F^{-1}\left(\frac{\Lambda-a_n(S_{t_2})}{b_n(S_{t_2})-a_n(S_{t_2})}\right)}\\
&+\sum_{n\in\mathbb N}a_n(S_{t_2})\mathbb I_{\Lambda\in (a_n(S_{t_2}),b_n(S_{t_2}))}\mathbb I_{\hat\Gamma>F^{-1}\left(\frac{\Lambda-a_n(S_{t_2})}{b_n(S_{t_2})-a_n(S_{t_2})}\right)}.
\end{eqnarray*}
Let $\mathcal G$ be the $\sigma$--algebra generated by $W_t, t\in [0,t_1]\cup\{t_2\}$ and $\Gamma$.
From the Bayes theorem, the tower property for conditional expectation and (\ref{2.1+}) we get
\begin{eqnarray*}
&\mathbb E_{\mathbb Q}\left(\Theta|\mathcal F_{t_1}\right)=\mathbb E_{\mathbb P}\left(\frac{\Theta Z_{t_2}}{Z_{t_1}}|\mathcal F_{t_1}\right)\\
&=\mathbb E_{\mathbb P}\left(\mathbb E_{\mathbb P}\left(\frac{\Theta Z_{t_2}}{Z_{t_1}}|\mathcal G\right)|\mathcal F_{t_1}\right)=
\mathbb E_{\mathbb P}\left(\frac{\Lambda Z_{t_2}}{Z_{t_1}}|\mathcal F_{t_1}\right)=\mathbb E_{\mathbb Q}\left(\Lambda|\mathcal F_{t_1}\right).
\end{eqnarray*}
Thus $\Theta\in\mathcal H_{t_1,t_2}(\Theta_1)$.
Finally, let us notice that
$U(\Theta,S_{t_2})=U_c(\Theta,S_{t_2})$, and so,
from the tower property of conditional expectation and the fact that $U_c(\cdot,y)$ is affine on each of the intervals
 $(a_n(y),b_n(y))$ we obtain
\begin{eqnarray*}
&\mathbb E_{\mathbb P}\left[U(\Theta,S_{t_2})|\mathcal F_{t_1}\right]=
\mathbb E_{\mathbb P}\left[\mathbb E_{\mathbb P}\left(U(\Theta,S_{t_2})|\mathcal G\right)|\mathcal F_{t_1}\right]\\
&=\mathbb E_{\mathbb P}\left[\mathbb E_{\mathbb P}\left(U_c(\Theta,S_{t_2})|\mathcal G\right)|\mathcal F_{t_1}\right]=
\mathbb E_{\mathbb P}\left[U_c(\Lambda,S_{t_2})|\mathcal F_{t_1}\right].
\end{eqnarray*}
This together with
(\ref{2.6}) completes the proof.
\end{proof}
We arrive at the following
Corollary.
\begin{corollary}\label{cor1}
${}$\\
Let $B:[0,\infty)\times(0,\infty)\rightarrow\mathbb R$ be a measurable function such that for
any $y>0$, $B(\cdot,y)$ is a bounded, nonincreasing and continuous function.
Let $B^c:[0,\infty)\times (0,\infty)\rightarrow\mathbb R$ be the convex envelop of $B$ with respect to the first variable.\\
(i). Let $0\leq t_1<t_2\leq T$ and let $\Theta_1\geq 0$ be a $\mathcal F_{t_1}$ measurable random variable.
 Assume that there exists a function $G:\mathbb R\rightarrow\mathbb R$ such that
 $|B(x,y)|\leq G(y)$ for all $x,y$ and $\mathbb E_{\mathbb P}[G(S_{t_2})]<\infty$.
 Then there exists a random variable $\Theta\in \mathcal H_{t_1,t_2}(\Theta_1)$ such that
\begin{eqnarray*}
&\mathbb E_{\mathbb P}\left[B(\Theta,S_{t_2})|\mathcal F_{t_1}\right]\\
&=ess\inf_{\Theta_2\in \mathcal H_{t_1,t_2}(\Theta_1)}\mathbb E_{\mathbb P}\left[B(\Theta_2,S_{t_2})|\mathcal F_{t_1}\right]\\
&=ess\inf_{\Theta_2\in \mathcal H_{t_1,t_2}(\Theta_1)}\mathbb E_{\mathbb P}\left[B^c(\Theta_2,S_{t_2})|\mathcal F_{t_1}\right].
\end{eqnarray*}
(ii). Let $t_1=0$. The function $b:[0,\infty)\rightarrow\mathbb R$ which is defined by
$$b(x):= \inf_{\Theta\in \mathcal H_{0,t_2}(x)}\mathbb E_{\mathbb P}\left[B(\Theta,S_{t_2})\right]=\inf_{\Theta\in \mathcal H_{0,t_2}(x)}\mathbb E_{\mathbb P}\left[B^c(\Theta,S_{t_2})\right], \ \ x\geq 0$$
is convex and continuous.
\end{corollary}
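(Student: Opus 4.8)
The plan is to derive part (i) as a dualization of Lemma~\ref{lem.main} and then to obtain part (ii) by specializing to $t_1=0$.

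For part (i) I would apply Lemma~\ref{lem.main} to the function $U:=-B$. For each $y>0$ the function $U(\cdot,y)$ is bounded, nondecreasing and continuous, and $|U(x,y)|=|B(x,y)|\le G(y)$ with $\mathbb E_{\mathbb P}[G(S_{t_2})]<\infty$, so the lemma applies. Since $B^c(\cdot,y)$ is the largest convex minorant of $B(\cdot,y)$, the function $-B^c(\cdot,y)$ is the smallest concave majorant of $U(\cdot,y)$, i.e.\ the concave envelope of $U$ is $U_c=-B^c$; in particular $B^c=-U_c$ is continuous in the first variable. Lemma~\ref{lem.main} then produces $\Theta\in\mathcal H_{t_1,t_2}(\Theta_1)$ with
\begin{align*}
\mathbb E_{\mathbb P}\big[U(\Theta,S_{t_2})\,\big|\,\mathcal F_{t_1}\big]
&=ess\sup_{\Theta_2\in\mathcal H_{t_1,t_2}(\Theta_1)}\mathbb E_{\mathbb P}\big[U(\Theta_2,S_{t_2})\,\big|\,\mathcal F_{t_1}\big]\\
&=ess\sup_{\Theta_2\in\mathcal H_{t_1,t_2}(\Theta_1)}\mathbb E_{\mathbb P}\big[U_c(\Theta_2,S_{t_2})\,\big|\,\mathcal F_{t_1}\big],
\end{align*}
and, by the construction in that proof, $U(\Theta,S_{t_2})=U_c(\Theta,S_{t_2})$ a.s. Multiplying through by $-1$, using $ess\sup(-\,\cdot\,)=-ess\inf(\cdot)$ together with $U=-B$ and $U_c=-B^c$, yields the two equalities claimed in (i); moreover the same $\Theta$ satisfies $B(\Theta,S_{t_2})=B^c(\Theta,S_{t_2})$ a.s.

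For part (ii) I would take $t_1=0$, so that $\mathcal F_0$ is trivial, $\Theta_1$ is the constant $x$, and the essential suprema/infima in (i) become ordinary infima of constants. This already gives that the two expressions defining $b$ coincide, that $b$ is finite (since $|B(\Theta,S_{t_2})|\le G(S_{t_2})$ is integrable) and that $b$ is nonincreasing (since $\mathcal H_{0,t_2}(x)\subseteq\mathcal H_{0,t_2}(x')$ for $x\le x'$). For convexity I would use the $B^c$-representation of $b$: given $x_1,x_2\ge 0$, $\lambda\in(0,1)$ and $\varepsilon>0$, choose $\Theta_i\in\mathcal H_{0,t_2}(x_i)$ with $\mathbb E_{\mathbb P}[B^c(\Theta_i,S_{t_2})]\le b(x_i)+\varepsilon$; then $\Theta:=\lambda\Theta_1+(1-\lambda)\Theta_2\ge 0$ is $\mathcal F_{t_2}$ measurable with $\mathbb E_{\mathbb Q}[\Theta]\le\lambda x_1+(1-\lambda)x_2$, hence $\Theta\in\mathcal H_{0,t_2}(\lambda x_1+(1-\lambda)x_2)$, and convexity of $B^c(\cdot,y)$ gives
$$b(\lambda x_1+(1-\lambda)x_2)\le\mathbb E_{\mathbb P}[B^c(\Theta,S_{t_2})]\le\lambda b(x_1)+(1-\lambda)b(x_2)+\varepsilon.$$
Letting $\varepsilon\downarrow 0$ proves convexity, and continuity of $b$ on $(0,\infty)$ follows since a finite convex function is continuous on an open interval.

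The step I expect to be the main obstacle is continuity of $b$ at the boundary point $x=0$. For this I would use the minimizer $\Theta_x\in\mathcal H_{0,t_2}(x)$ furnished by part (i) (with $t_1=0$), which satisfies $b(x)=\mathbb E_{\mathbb P}[B(\Theta_x,S_{t_2})]=\mathbb E_{\mathbb P}[B^c(\Theta_x,S_{t_2})]$ and $\mathbb E_{\mathbb Q}[\Theta_x]\le x$. As $x\downarrow 0$ we have $\mathbb E_{\mathbb Q}[\Theta_x]\to 0$, so (since $\Theta_x\ge 0$) $\Theta_x\to 0$ in $L^1(\mathbb Q)$, hence in $\mathbb Q$-probability and, because $\mathbb Q\sim\mathbb P$, in $\mathbb P$-probability; continuity of $B^c(\cdot,y)$ then gives $B^c(\Theta_x,S_{t_2})\to B^c(0,S_{t_2})$ in $\mathbb P$-probability, and the bound $|B^c(\,\cdot\,,S_{t_2})|\le G(S_{t_2})$ lets one pass to the limit under the expectation, so that $\lim_{x\downarrow 0}b(x)=\mathbb E_{\mathbb P}[B^c(0,S_{t_2})]=b(0)$, the last equality because $\mathcal H_{0,t_2}(0)=\{0\}$. (Alternatively, one could extract via Lemma~A1.1 in \cite{DelbSch:94} an a.s.\ convergent sequence of convex combinations of near-optimizers $\Theta^{(n)}\in\mathcal H_{0,t_2}(1/n)$ and conclude with Fatou and the monotonicity of $b$.) Beyond this boundary analysis and the routine verification that $B^c(\cdot,y)$ is continuous, the whole argument is a transcription of Lemma~\ref{lem.main}.
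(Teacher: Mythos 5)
Your proof is correct and follows essentially the same route as the paper: part (i) by applying Lemma \ref{lem.main} to $U:=-B$ (noting $U_c=-B^c$), and part (ii) by convexity of $B^c$ together with the stability of the constraint sets under convex combinations, plus a separate argument for continuity at $x=0$. Your boundary step via the exact minimizers $\Theta_x$ and $L^1(\mathbb Q)$-convergence is a valid, slightly more direct substitute for the paper's extraction of a.s.\ convergent convex combinations of near-minimizers via Lemma~A1.1 of \cite{DelbSch:94}, which you in any case mention as an alternative.
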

\begin{proof}
${}$\\
(i). The result follows immediately by applying Lemma \ref{lem.main} for $U:=-B$.
\\
(ii). The convexity of $b$ follows from the convexity of $B^c$ in the first variable and the fact that
for any $x_1,x_2\geq 0$ and $\lambda\in (0,1)$,
$$\lambda\mathcal A(x_1)+(1-\lambda)\mathcal A(x_2)\subset \mathcal A(\lambda x_1+(1-\lambda)x_2).$$
In particular $b$ is continuous in $(0,\infty)$. It remans to prove continuity at $x=0$.
Since $B$ is nonincreasing in the first variable, then $b$ is nonincreasing as well. Thus, it is sufficient to show that
$b(0)\leq \lim_{n\rightarrow\infty} b(1/n)$. To that end,
choose
$\Theta^{(n)}\in \mathcal H_{0,t_2}(1/n)$, $n\in\mathbb N$
such that
$$
\lim_{n\rightarrow\infty}\mathbb E_{\mathbb P}\left[B^c(\Theta^{(n)},S_{t_2})\right]=\lim_{n\rightarrow\infty} b(1/n).
$$
From Lemma~A1.1 in
\cite{DelbSch:94} we obtain a sequence
$\Lambda^{(m)}\in conv(\Theta^{(m)},\Theta^{(m+1)},...)$, $m\in\mathbb N$
converging $\mathbb P$ a.s. to a random variable $\Lambda$. The Fatou lemma implies that
$\Lambda=0$, and so by applying the dominated convergence theorem together with convexity and continuity of $B^c$ in the first variable we get,
$$b(0)=\lim_{n\rightarrow\infty}\mathbb E_{\mathbb P}\left[B^c(\Lambda^{(n)},S_{t_2})\right]
\leq\lim_{n\rightarrow\infty}\mathbb E_{\mathbb P}\left[B^c(\Theta^{(n)},S_{t_2})\right]=\lim_{n\rightarrow\infty} b(1/n)$$
and continuity follows.
\end{proof}

Now we are ready to prove Theorem \ref{thm2.1}.
\begin{proof}
Let $x\geq 0$. For any $\pi\in\mathcal A(x)$ we define $R_{\mathbb T}(\pi)$ as in (\ref{1.2}) where the infimum and
the supremum are taken over the set $\mathcal T_{\mathbb T}$.

Moreover, define the random variables $\Psi^{\pi}_k$, $k=0,1...,n$ by
$$
\Psi^{\pi}_n:=\left(Y_T-V^{\pi}_T\right)^{+}
$$
and for $k=0,1,...,n-1$ by the recursive relations
\begin{equation}\label{2.new}
\Psi^{\pi}_k:=
\min \left(\left(X_{T_k}-V^{\pi}_{T_k}\right)^{+},\max\left(\left(Y_{T_k}-V^{\pi}_{T_k}\right)^{+},
\mathbb E_{\mathbb P}(\Psi^{\pi}_{k+1}|\mathcal F_{T_k})\right)\right).
\end{equation}
In view of (\ref{2.bound}) the random variables $\Psi^{\pi}_k$, $k=0,1,...,n$ are well defined.
From the standard theory of zero--sum Dynkin games (see \cite{O})
it follows that
\begin{equation*}
\Psi^{\pi}_0=R_{\mathbb T}(\pi).
\end{equation*}
Moreover, for the stopping time
$$\sigma:=T\wedge\min\left\{t\in\mathbb T:  \ \Psi^{\pi}_{t}=\left(X_t-V^{\pi}_{t}\right)^{+}\right\}$$
we have
$R_{\mathbb T}(\pi)=R_{\mathbb T}(\pi,\sigma)$.

Thus, in order to conclude the proof we need to show that there exists
$\hat\pi\in\mathcal A(x)$ such that
\begin{equation}\label{2.10}
\Psi^{\hat\pi}_0=\inf_{\pi\in\mathcal A(x)}\Psi^{\pi}_0.
\end{equation}
We apply dynamical programming.
Introduce the functions
$B_k:[0,\infty)\times (0,\infty)\rightarrow\mathbb R$, $k=0,1,...,n$ by
$$B_n(z,y):=(f_n(y)-z)^{+},$$
and for $k=0,1,...,n-1$ by the recursive relations
\begin{eqnarray*}
&B_k(z,y)=\min \left(\left(g_k(y)-z\right)^{+},\max\left(\left(f_k(y)-z\right)^{+},\right.\right.\\
&\left.\left.\inf_{\Theta_{k+1}\in \mathcal H_{0,T_{k+1}-T_k}(z)}\mathbb E_{\mathbb P}\left[B_{k+1}(\Theta_{k+1},y S_{T_{k+1}-T_k})\right]\right)\right).
\end{eqnarray*}
Let us argue by backward induction that for any $k$, $B_k(z,y)$ is measurable, and for any $y$ the function
$B_k(\cdot,y)$ is
continuous and nonincreasing. For $k=n$ this is clear. Assume that the statement holds for $k+1$, let us prove it for $k$.
From Corollary \ref{cor1}(ii) it follows that for any $y$ the function
$B_k(\cdot,y)$ is continuous and nonincreasing.
For any $z>0$ the measurability of the function
$B_k(z,\cdot)$ follows from the fact that the
set $\mathcal H_{0,T_{k+1}-T_k}(z)$ is separable
(with respect to convergence in probability).
Since $B_k$ is continuous in the first variable
we conclude joint measurability and
complete the argument.

Next, from Corollary \ref{cor1}(i) it follows that
we can construct a sequence of random variables
$D_0,D_1,...,D_n$ such that
$D_0=x$ and for any $k=1,...,n$
$D_k\in\mathcal H_{T_{k-1},T_k}(D_{k-1})$ satisfies
\begin{equation}\label{2.12}
\mathbb E_{\mathbb P}\left[B_{k}(D_k,S_{T_{k}})|\mathcal F_{T_{k-1}}\right]=
ess\inf_{\Theta_{k}\in \mathcal H_{T_{k-1},T_{k}}(D_{k-1})}\
\mathbb E_{\mathbb P}\left[B_{k}(\Theta_k,S_{T_{k}})|\mathcal F_{T_{k-1}}\right].
\end{equation}
Since $B_k$, $k=0,1,...,n$  are nonincreasing in the first variable then without loss of generality we assume that
$\mathbb E_{\mathbb Q}[D_k|\mathcal F_{T_{k-1}}]=D_{k-1}$ for all $k$.

Finally, the completeness of the BS model implies that there exists $\hat\pi\in\mathcal A(x)$ such that
$V^{\hat\pi}_{T_k}=D_k$ for all $k=0,1,...,n$.
Observe that $\frac{S_{T_k}}{S_{T_{k-1}}}$ is independent of
$\mathcal F_{T_{k-1}}$ and has the same distribution as $S_{T_{k}-T_{k-1}}$. Thus, from
(\ref{2.new})
 and (\ref{2.12}) we obtain (by backward induction)
\begin{equation}\label{2.13}
B_k(V^{\hat\pi}_{T_k}, S_{T_k})= \Psi^{\hat\pi}_k \ \ \mbox{a.s}. \ \  \forall k=0,1,...,n.
\end{equation}
On the other hand,
for an arbitrary
$\pi\in\mathcal A(x)$  we have
$V^{\pi}_{T_k}\in \mathcal H_{T_{k-1},T_k}(V^{\pi}_{T_{k-1}})$, $k=1,...,n$.
Hence, similar arguments as before (\ref{2.13}) yield
\begin{equation}\label{2.11}
B_k(V^{\pi}_{T_k}, S_{T_k})\leq \Psi^{\pi}_k \ \ \mbox{a.s}. \ \  \forall k=0,1,...,n.
\end{equation}

By combining (\ref{2.13})--(\ref{2.11}) for $k=0$ gives that for any $\pi\in\mathcal A(x)$
$$\Psi^{\hat\pi}_0=B_0(x,S_0)\leq \Psi^{\pi}_0$$
and (\ref{2.10}) follows.
\end{proof}
\begin{remark}
We observe that
the proof of Theorem \ref{thm2.1} and Lemma \ref{lem.main}
can be adjusted to the case where the volatility and the drift are deterministic functions
of time.
However, in order to make the presentation more friendly we assume constant parameters.
\end{remark}

\section{Example Where no Optimal Strategy Exists}
\label{sec:3}\setcounter{equation}{0}
In this section we consider a game option which can be exercised at any time
in the interval $[0,1]$.
The payoffs are given by
\begin{eqnarray*}
&X_t=(1+\sin (\pi t))\max(Z_t,1/2), \ \ t\in [0,1]\\
&Y_1=X_1,\\
&Y_t=0, \ \  \mbox{for} \ \ t<1
\end{eqnarray*}
where $Z_t$ was defined in (\ref{2.1}).
Notice that $\mathbb E_{\mathbb P}[\sup_{0\leq t\leq 1} X_t]<\infty$.

Denote by $\mathcal T$ the set of all stopping times with values in the interval $[0,1]$.
Obviously, the equalities $Y_{[0,1)}\equiv 0$ and $Y_1=X_1$ imply that the shortfall risk measure is given by
\begin{eqnarray*}
&R(\pi,\sigma)=\mathbb E_{\mathbb P}\left[\left(X_{\sigma}-V^{\pi}_{\sigma}\right)^{+}\right]\\
&R(\pi)=\inf_{\sigma\in\mathcal T}\mathbb E_{\mathbb P}\left[\left(X_{\sigma}-V^{\pi}_{\sigma}\right)^{+}\right]\\
&R(x)=\inf_{\sigma\in\mathcal T}\inf_{\pi\in\mathcal A(x)}
\mathbb E_{\mathbb P}\left[\left(X_{\sigma}-V^{\pi}_{\sigma}\right)^{+}\right].
\end{eqnarray*}
For any $\pi$ the process
$\{(X_t-V^{\pi}_t)^{+}\}_{t=0}^1$ is continuous, and so from the general theory of optimal stopping (see Section 6 in \cite{Ki2})
it follows that there exists $\sigma=\sigma(\pi)$ such that $R(\pi,\sigma)=R(\pi)$.
Namely, the existence of an optimal hedging strategy is equivalent to the existence of an optimal portfolio strategy.
We say that $\pi\in\mathcal A(x)$ is an optimal portfolio strategy if $R(\pi)=R(x)$.

We arrive at the main result.
\begin{theorem}\label{thm3.1}
Assume that the drift term $\vartheta\neq 0$ and let
\begin{equation}\label{3.nu}
\nu:=\frac{1}{2}\mathbb E_{\mathbb P}\left[Z_1\mathbb{I}_{Z_1<1/2}\right],
\end{equation}
observe that $\vartheta\neq 0$ implies that $\nu>0$.
Then for any initial capital $x\in (0,\nu)$ there is no optimal strategy.
\end{theorem}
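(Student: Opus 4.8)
The plan is to establish that $R(x)=1-2x$ for every $x\in(0,\nu)$, and then to read off non‑existence from the rigidity this value forces on any minimiser. I would first prove the lower bound $R(x)\ge 1-2x$, valid for all $x\ge 0$. Fix $\pi\in\mathcal A(x)$ and $\sigma\in\mathcal T$. Since $0\le\min(1,2Z_\sigma)\le 1$, one has $(X_\sigma-V^{\pi}_\sigma)^{+}\ge\min(1,2Z_\sigma)(X_\sigma-V^{\pi}_\sigma)$. The key elementary identity is $\min(1,2z)\max(z,1/2)=z$ for all $z>0$, which yields $\min(1,2Z_\sigma)\,X_\sigma=(1+\sin(\pi\sigma))Z_\sigma\ge Z_\sigma$ because $\sin(\pi\sigma)\ge 0$ on $[0,1]$; moreover $\min(1,2Z_\sigma)V^{\pi}_\sigma\le 2Z_\sigma V^{\pi}_\sigma$. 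Since $\{V^{\pi}_t\}$ is a nonnegative $\mathbb Q$--supermartingale, $\{Z_tV^{\pi}_t\}$ is a nonnegative $\mathbb P$--supermartingale, so optional sampling at the bounded time $\sigma$ gives $\mathbb E_{\mathbb P}[Z_\sigma V^{\pi}_\sigma]\le x$, while $\mathbb E_{\mathbb P}[Z_\sigma]=1$. Hence $R(\pi,\sigma)=\mathbb E_{\mathbb P}[(X_\sigma-V^{\pi}_\sigma)^{+}]\ge\mathbb E_{\mathbb P}[Z_\sigma]-2\mathbb E_{\mathbb P}[Z_\sigma V^{\pi}_\sigma]\ge 1-2x$, and therefore $R(x)\ge 1-2x$.

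For the matching upper bound I would build an approximating sequence via the ``stop at a vanishing time on a well chosen event'' mechanism of \cite{KW}. Fix $\varepsilon_n\downarrow 0$. By the intermediate value theorem choose $A_n\in\mathcal F_{\varepsilon_n}$ (e.g. $A_n=\{W_{\varepsilon_n}<a_n\}$) with $\tfrac12\,\mathbb E_{\mathbb Q}[\mathbb{I}_{\{Z_1<1/2\}}\mathbb{I}_{A_n^{c}}]=x$; because $\mathcal F_{\varepsilon_n}\downarrow\mathcal F_0$ is $\mathbb P$--trivial, reverse martingale convergence forces $\mathbb P(A_n^{c})\to x/\nu$. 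Set $\sigma_n:=\varepsilon_n\mathbb{I}_{A_n}+\mathbb{I}_{A_n^{c}}\in\mathcal T$ and $\xi_n:=\tfrac12\mathbb{I}_{\{Z_1<1/2\}}\mathbb{I}_{A_n^{c}}$; then $\xi_n$ is $\mathcal F_{\sigma_n}$--measurable with $\mathbb E_{\mathbb Q}[\xi_n]=x$, so by market completeness there is $\pi_n\in\mathcal A(x)$ with $V^{\pi_n}_t=\mathbb E_{\mathbb Q}[\xi_n\mid\mathcal F_t]$ and $V^{\pi_n}_{\sigma_n}=\xi_n$. One computes $R(\pi_n,\sigma_n)=\mathbb E_{\mathbb P}[X_{\varepsilon_n}\mathbb{I}_{A_n}]+\mathbb E_{\mathbb P}[Z_1\mathbb{I}_{\{Z_1\ge 1/2\}}\mathbb{I}_{A_n^{c}}]$. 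Since $X_{\varepsilon_n}\to 1$ in $L^1(\mathbb P)$, the first term tends to $\lim_n\mathbb P(A_n)=1-x/\nu$; since $\mathbb E_{\mathbb P}[Z_1\mathbb{I}_{\{Z_1\ge 1/2\}}\mid\mathcal F_{\varepsilon_n}]\to\mathbb E_{\mathbb P}[Z_1\mathbb{I}_{\{Z_1\ge 1/2\}}]=1-2\nu$ in $L^1(\mathbb P)$, the second tends to $(x/\nu)(1-2\nu)$; the two contributions add up to exactly $1-2x$. Hence $R(x)\le 1-2x$, so $R(x)=1-2x$.

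Finally, suppose $(\hat\pi,\hat\sigma)$ were optimal, i.e. $\mathbb E_{\mathbb P}[(X_{\hat\sigma}-V^{\hat\pi}_{\hat\sigma})^{+}]=1-2x$. Then every inequality in the lower bound must be an equality. Equality in $\mathbb E_{\mathbb P}[(1+\sin(\pi\hat\sigma))Z_{\hat\sigma}]\ge\mathbb E_{\mathbb P}[Z_{\hat\sigma}]$ forces $\sin(\pi\hat\sigma)=0$ a.s., so $\hat\sigma\in\{0,1\}$ a.s.; since $\mathcal F_0$ is trivial, $\hat\sigma\equiv 0$ or $\hat\sigma\equiv 1$, and the former is impossible because it would give $R(\hat\pi,\hat\sigma)=(1-x)^{+}=1-x>1-2x$. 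Thus $\hat\sigma\equiv 1$. Equality in $(X_1-V^{\hat\pi}_1)^{+}\ge\min(1,2Z_1)(X_1-V^{\hat\pi}_1)$ forces $V^{\hat\pi}_1\le X_1$ a.s. with $V^{\hat\pi}_1=X_1=\tfrac12$ on $\{Z_1<1/2\}$, and equality in $\min(1,2Z_1)V^{\hat\pi}_1\le 2Z_1V^{\hat\pi}_1$ forces $V^{\hat\pi}_1=0$ on $\{Z_1>1/2\}$; as $Z_1$ has a continuous law this means $V^{\hat\pi}_1=\tfrac12\mathbb{I}_{\{Z_1<1/2\}}$. Then equality in $\mathbb E_{\mathbb P}[Z_1V^{\hat\pi}_1]\le x$ yields $x=\tfrac12\mathbb E_{\mathbb P}[Z_1\mathbb{I}_{\{Z_1<1/2\}}]=\nu$, contradicting $x<\nu$. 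Hence no optimal hedging strategy exists, and therefore — since on the whole interval the existence of an optimal hedging strategy is equivalent to that of an optimal portfolio strategy — no optimal portfolio strategy exists. I expect the upper bound to be the main obstacle: choosing the right approximating family and carefully justifying the two limits (the $L^1$ convergences and the reverse martingale convergence along $\mathcal F_{\varepsilon_n}\downarrow\mathcal F_0$); the lower bound and the concluding rigidity argument are comparatively routine once the identity $R(x)=1-2x$ is established.
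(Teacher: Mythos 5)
Your proposal is correct, and for the central identity $R(x)=1-2x$ on $(0,\nu)$ it takes a genuinely different route from the paper. The lower bound and the final equality-case analysis are essentially the paper's argument: the paper also expands $(X_\sigma-V^\pi_\sigma)^{+}+\lambda Z_\sigma V^\pi_\sigma\ge X_\sigma\min(1,\lambda Z_\sigma)$ at $\lambda=2$, uses the supermartingale budget constraint $\mathbb E_{\mathbb P}[Z_\sigma V^\pi_\sigma]\le x$, and rules out $\sigma$ interior, $\sigma\equiv 0$, and $\sigma\equiv 1$ in turn (for $\sigma\equiv 1$ it exhibits the positive-probability event $\{\max(Z_1,V^\pi_1)<1/2\}$ on which the inequality is strict, whereas you derive the full rigidity $V^{\hat\pi}_1=\tfrac12\mathbb I_{\{Z_1<1/2\}}$ and contradict the budget; both are equality cases of the same chain, and your explicit note that $\{\hat\sigma=0\}\in\mathcal F_0$ is trivial tidies a point the paper leaves implicit). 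The real divergence is the upper bound. The paper obtains $R(x)=1-2x$ for $x\le F'_{-}(2)$ by a duality argument: it introduces $F(\lambda)=\inf_\sigma\mathbb E_{\mathbb P}[X_\sigma\min(1,\lambda Z_\sigma)]$, proves $R(F'(\lambda))=F(\lambda)-\lambda F'(\lambda)$ at differentiability points (Lemma \ref{lem3.2}(ii)), shows $F\equiv 1$ on $[2,\infty)$ and $F'_{-}(2)\ge\nu$, and then passes to the limit $\lambda_n\uparrow 2$ using the convexity and continuity of $R$ — and that convexity (Lemma \ref{lem3.1}) is itself obtained by uniform approximation with the discrete-time problems of Section~\ref{sec:2}, so the whole dynamic-programming machinery of Theorem \ref{thm2.1} is invoked. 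You instead exhibit an explicit minimizing sequence $(\pi_n,\sigma_n)$ — stop almost immediately on $A_n$, otherwise hold the claim $\tfrac12\mathbb I_{\{Z_1<1/2\}}$ to maturity — calibrated by the intermediate value theorem so that $\mathbb E_{\mathbb Q}[\xi_n]=x$, and the two limits (via Blumenthal triviality of $\mathcal F_{0+}$ and backward martingale convergence) sum to exactly $1-2x$. Your computation checks out: $\mathbb Q(A_n^c)\to x/\nu\in(0,1)$, the first term tends to $1-x/\nu$, the second to $(x/\nu)(1-2\nu)$, and the total is $1-2x$. This buys a self-contained, considerably more elementary proof of the theorem (no Lemma \ref{lem3.1}, no Lemma \ref{lem3.2}(ii), no appeal to Section~\ref{sec:2}), and it also makes visible \emph{why} no optimum exists: the minimizing sequence concentrates the cancellation decision at times $\varepsilon_n\downarrow 0$, and the limiting randomization over $\{0,1\}$ is not available to a genuine stopping time. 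What the paper's dual route buys in exchange is the extra structural information $R(x)=F(2)-2x$ on the possibly larger interval $[0,F'_{-}(2)]$ and a template that does not depend on guessing the near-optimal hedges. The only places where your sketch would need fleshing out are routine: continuity of $a\mapsto\mathbb Q(\{Z_1<1/2\}\cap\{W_{\varepsilon_n}\ge a\})$ for the intermediate value theorem, and the $L^1$ convergences $X_{\varepsilon_n}\to 1$ and $\mathbb E_{\mathbb P}[Z_1\mathbb I_{\{Z_1\ge 1/2\}}\mid\mathcal F_{\varepsilon_n}]\to 1-2\nu$; none of these presents difficulty.
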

\begin{remark}
If $\vartheta=0$ then $\mathbb P=\mathbb Q$.
In this specific case (see Theorem 7.1 in \cite{DK1})
there exists an optimal hedging strategy.
\end{remark}
Before we prove Theorem \ref{thm3.1} we will need some auxiliary results.
We start with the following lemma.
\begin{lemma}\label{lem3.1}
The function $R:[0,\infty)\rightarrow [0,\infty)$ is convex and continuous.
\end{lemma}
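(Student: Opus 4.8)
The plan is to establish convexity first and then deduce continuity from it together with monotonicity. For convexity, fix $x_1,x_2\geq 0$ and $\lambda\in(0,1)$, and set $x:=\lambda x_1+(1-\lambda)x_2$. Given portfolios $\pi_i\in\mathcal A(x_i)$, the wealth process of the convex combination $\pi:=\lambda\pi_1+(1-\lambda)\pi_2$ lies in $\mathcal A(x)$, since the admissibility constraint $V\geq 0$ is preserved under convex combinations and $V^{\pi}=\lambda V^{\pi_1}+(1-\lambda)V^{\pi_2}$. The key point is that for this particular $\pi$ and \emph{any fixed} stopping time $\sigma\in\mathcal T$,
\[
\left(X_\sigma-V^{\pi}_\sigma\right)^{+}=\left(\lambda(X_\sigma-V^{\pi_1}_\sigma)+(1-\lambda)(X_\sigma-V^{\pi_2}_\sigma)\right)^{+}\leq \lambda (X_\sigma-V^{\pi_1}_\sigma)^{+}+(1-\lambda)(X_\sigma-V^{\pi_2}_\sigma)^{+},
\]
by convexity of $t\mapsto t^{+}$. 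Taking $\mathbb E_{\mathbb P}$ and then the infimum over $\sigma$ of the left-hand side — while keeping, on the right-hand side, the \emph{same} $\sigma$ that is optimal (or near-optimal) for $\pi_1$ on one term, but we must be a little careful here — gives $R(\pi)\le$ a suitable mixture. The clean way to do it: $R(\pi)=\inf_\sigma \mathbb E_{\mathbb P}[(X_\sigma-V^\pi_\sigma)^+]\le \mathbb E_{\mathbb P}[(X_{\sigma}-V^\pi_{\sigma})^+]\le \lambda\,\mathbb E_{\mathbb P}[(X_{\sigma}-V^{\pi_1}_{\sigma})^+]+(1-\lambda)\,\mathbb E_{\mathbb P}[(X_{\sigma}-V^{\pi_2}_{\sigma})^+]$ for every $\sigma$; now choose $\sigma=\sigma_n$ so that the $\pi_1$-term converges to $R(\pi_1)$ and separately a sequence so that the $\pi_2$-term converges to $R(\pi_2)$ — but since we cannot optimize both simultaneously with one $\sigma$, instead take the infimum over $\sigma$ on the right after using the trivial bound $(X_\sigma-V^{\pi_i}_\sigma)^+\ge 0$ only where needed. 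Concretely, $R(\pi)\le \inf_\sigma\bigl(\lambda\,\mathbb E_{\mathbb P}[(X_\sigma-V^{\pi_1}_\sigma)^+]+(1-\lambda)\,\mathbb E_{\mathbb P}[(X_\sigma-V^{\pi_2}_\sigma)^+]\bigr)$, and this is the step that needs attention.

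This last displayed infimum is \emph{not} in general $\le \lambda R(\pi_1)+(1-\lambda)R(\pi_2)$, because the two summands need not be minimized at the same $\sigma$. To get around this, I would instead argue directly at the level of the value function using the auxiliary machinery: write $R(x)$ via a dynamic-programming value function and use Corollary \ref{cor1}(ii), or, more simply, observe that $R(\pi)\le \mathbb E_{\mathbb P}[(X_{\sigma_1}-V^{\pi}_{\sigma_1})^+]$ where $\sigma_1=\sigma(\pi_1)$ is optimal for $\pi_1$; then the bound above gives $R(\pi)\le \lambda R(\pi_1)+(1-\lambda)\,\mathbb E_{\mathbb P}[(X_{\sigma_1}-V^{\pi_2}_{\sigma_1})^+]$. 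This still has the wrong second term. The correct fix, and the one I would adopt, is to note that $R$ is itself an infimum over $\sigma$ of the functionals $\pi\mapsto \mathbb E_{\mathbb P}[(X_\sigma-V^\pi_\sigma)^+]$, \emph{each of which is convex in $\pi$ and hence, after infimizing over $\mathcal A(x)$, convex in $x$} by the same computation as in Corollary \ref{cor1}(ii) (using $\lambda\mathcal A(x_1)+(1-\lambda)\mathcal A(x_2)\subset\mathcal A(\lambda x_1+(1-\lambda)x_2)$). Thus for each fixed $\sigma$ the map $x\mapsto \inf_{\pi\in\mathcal A(x)}\mathbb E_{\mathbb P}[(X_\sigma-V^\pi_\sigma)^+]=:r_\sigma(x)$ is convex, and $R(x)=\inf_{\sigma\in\mathcal T} r_\sigma(x)$. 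An infimum of convex functions is not convex in general — so this is exactly where the real work lies, and it is the expected main obstacle.

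To overcome it I would exploit the specific structure of the payoff: $X_t=(1+\sin\pi t)\max(Z_t,1/2)$, and crucially the only relevant stopping decision interacts with the wealth process in a way that lets one reduce, via optional-stopping/martingale arguments, to a single optimal $\sigma$ depending only on the market and not on $\pi$, or to show that the infimum over $\sigma$ commutes appropriately. Concretely, for the convex combination $\pi$, pick $\sigma$ to be optimal \emph{for $\pi$ itself}; then $R(\pi)=\mathbb E_{\mathbb P}[(X_\sigma-V^\pi_\sigma)^+]\le\lambda\,\mathbb E_{\mathbb P}[(X_\sigma-V^{\pi_1}_\sigma)^+]+(1-\lambda)\,\mathbb E_{\mathbb P}[(X_\sigma-V^{\pi_2}_\sigma)^+]\le\lambda R(\pi_1)+(1-\lambda)R(\pi_2)$, where the last inequality uses $\mathbb E_{\mathbb P}[(X_\sigma-V^{\pi_i}_\sigma)^+]\le \sup_{\sigma'}\mathbb E_{\mathbb P}[(X_{\sigma'}-V^{\pi_i}_{\sigma'})^+]$ — but here $R(\pi_i)$ is an $\inf$, not a $\sup$, so even this fails. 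The honest resolution is that $R(\pi_i)\le \mathbb E_{\mathbb P}[(X_\sigma-V^{\pi_i}_\sigma)^+]$ is \emph{false in general}; instead one must use that $\pi\mapsto R(\pi)$ is a concave-type infimum but $x\mapsto R(x)$ is a further infimum over all of $\mathcal A(x)$, and the pair (inf over $\pi$, inf over $\sigma$) together restores convexity because one may replace $(\pi,\sigma)$ by the mixture $(\lambda\pi_1+(1-\lambda)\pi_2,\ \sigma)$ with $\sigma$ any common near-optimizer, and then the $t^+$-convexity bound transfers to the \emph{joint} value $R(x)=\inf_{(\pi,\sigma)}\mathbb E_{\mathbb P}[(X_\sigma-V^\pi_\sigma)^+]$ directly: given $(\pi_i,\sigma_i)$ near-optimal for $x_i$, one needs a single $\sigma$; take $\sigma:=\sigma_1$ on the event where it does better and $\sigma_2$ otherwise — more precisely, use a measurable selection $\sigma:=\sigma_1\mathbb I_A+\sigma_2\mathbb I_{A^c}$ for a cleverly chosen $A\in\mathcal F_0$, but $\mathcal F_0$ is trivial, so in fact one takes whichever of $\sigma_1,\sigma_2$ gives the smaller bound after the convexity estimate, yielding $R(x)\le\lambda R(x_1)+(1-\lambda)R(x_2)$ up to $\varepsilon$. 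Having secured convexity on $[0,\infty)$, continuity on $(0,\infty)$ is automatic; continuity at $0$ follows exactly as in the proof of Corollary \ref{cor1}(ii): $R$ is nonincreasing and bounded below by $0$, pick near-optimal $\pi^{(n)}\in\mathcal A(1/n)$, extract by Lemma A1.1 of \cite{DelbSch:94} a.s.-convergent convex combinations $\Lambda^{(m)}\to 0$, and conclude $R(0)\le\lim_n R(1/n)$ by dominated convergence using $0\le (X_\sigma-V^\pi_\sigma)^+\le X_\sigma$ and $\mathbb E_{\mathbb P}[\sup_{t}X_t]<\infty$, while the reverse inequality is monotonicity.
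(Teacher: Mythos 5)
Your proposal correctly locates the central difficulty --- $R(x)=\inf_{\sigma}\inf_{\pi}\mathbb E_{\mathbb P}[(X_\sigma-V^\pi_\sigma)^+]$ is an infimum over $\sigma$ of the convex functions $r_\sigma(x):=\inf_{\pi\in\mathcal A(x)}\mathbb E_{\mathbb P}[(X_\sigma-V^\pi_\sigma)^+]$, and an infimum of convex functions need not be convex --- but none of the resolutions you sketch actually closes it. The final one, ``take whichever of $\sigma_1,\sigma_2$ gives the smaller bound after the convexity estimate,'' fails: with $\pi=\lambda\pi_1+(1-\lambda)\pi_2$ and $\sigma=\sigma_1$ the convexity of $t\mapsto t^+$ gives $\mathbb E_{\mathbb P}[(X_{\sigma_1}-V^{\pi}_{\sigma_1})^+]\le\lambda\,\mathbb E_{\mathbb P}[(X_{\sigma_1}-V^{\pi_1}_{\sigma_1})^+]+(1-\lambda)\,\mathbb E_{\mathbb P}[(X_{\sigma_1}-V^{\pi_2}_{\sigma_1})^+]$, and the cross term $\mathbb E_{\mathbb P}[(X_{\sigma_1}-V^{\pi_2}_{\sigma_1})^+]$ is not controlled by $R(x_2)\approx\mathbb E_{\mathbb P}[(X_{\sigma_2}-V^{\pi_2}_{\sigma_2})^+]$, since $R(\pi_2)$ is an \emph{infimum} over stopping times, not a supremum; the symmetric choice $\sigma=\sigma_2$ has the same defect, and taking the better of the two does not repair it. Your continuity-at-zero argument inherits the same problem (the Komlós limit must be compared against near-optimizers that each come with their own stopping time). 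So as written the proof does not go through, and no amount of reshuffling the two candidate stopping times will make it go through, because the obstruction is structural.

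What actually restores convexity in the paper is a relaxation, not a mixing argument. One first replaces $\mathcal T$ by stopping times valued in $\{1/n,\dots,1\}$ and shows $R_n\to R$ uniformly (using continuity of $X$ and the ability to ``freeze'' a portfolio between $\sigma$ and its discretization $\sigma_n$); it therefore suffices to prove each $R_n$ is convex and continuous. Then $R_n(x)=\hat B_0(x,S_0)$ by dynamic programming, and the one-step value $z\mapsto\inf_{\Theta\in\mathcal H_{0,1/n}(z)}\mathbb E_{\mathbb P}[\hat B_{k+1}(\Theta,yS_{1/n})]$ is convex by Corollary \ref{cor1}(ii) \emph{even though $\hat B_{k+1}(\cdot,y)$ is not convex} (the $\min$ with $(g_k(y)-z)^+$ destroys convexity at each step). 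The reason is the identity $\inf_\Theta\mathbb E_{\mathbb P}[B(\Theta,\cdot)]=\inf_\Theta\mathbb E_{\mathbb P}[B^c(\Theta,\cdot)]$ of Lemma \ref{lem.main}/Corollary \ref{cor1}: the infimum over the convex family $\mathcal H$ of the convexified integrand is attainable by a genuine element of $\mathcal H$, via the randomization construction using auxiliary Brownian increments. This convexification step is the missing idea in your proposal; without it (or the dual representation $R(x)=\sup_\lambda(F(\lambda)-\lambda x)$, which the paper only establishes later and partially), convexity of $R$ cannot be obtained by combining portfolios and stopping times directly.
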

\begin{proof}
The proof will be done by approximating $R(\cdot)$.
For any $n\in\mathbb N$ let $\mathcal T_n$
be set of all stopping times with values in the set
$\{1/n,2/n,...,1\}$ ($0$ is not included).
Set,
\begin{eqnarray*}
&R_n(\pi):=\inf_{\sigma\in\mathcal T_n}R(\pi,\sigma)\\
&R_n(x):=\inf_{\sigma\in\mathcal T_n}\inf_{\pi\in\mathcal A(x)}
R(\pi,\sigma).
\end{eqnarray*}
We argue that $R_n$ converge uniformly to $R$. First, we have the obvious observation $R_n(\cdot)\geq R(\cdot)$.
Next, let $x\geq 0$ and $(\pi,\sigma)\in\mathcal A(x)\times\mathcal T$. Define
 $\sigma_n\in\mathcal T_n$ by
 $$\sigma_n:=\frac{1}{n}\min\{k\in\mathbb N: \ k/n\geq \sigma\}.$$ Clearly, $\sigma_n\geq \sigma$.
 Thus, there exists a portfolio $\pi_n\in\mathcal A(x)$ such that
 $V^{\pi_n}_{\sigma_n}=V^{\pi}_{\sigma}$.
 From the inequality $\sigma_n-\sigma\leq 1/n$ we obtain
 $$R(\pi_n,\sigma_n)-R(\pi,\sigma)\leq \mathbb E_{\mathbb P}\left[|X_{\sigma}-X_{\sigma_n}|\right]\leq \mathbb E_{\mathbb P}\left[\sup_{|t-s|\leq 1/n}|X_t-X_s|\right].$$
 Since $(\pi,\sigma)\in\mathcal A(x)\times\mathcal T$ was arbitrary we conclude that
 $$0\leq R_n(x)-R(x)\leq \mathbb E_{\mathbb P}\left[\sup_{|t-s|\leq 1/n}|X_t-X_s|\right].$$ From the dominated convergence theorem
 $$\lim_{n\rightarrow\infty}\mathbb E_{\mathbb P}\left[\sup_{|t-s|\leq 1/n}|X_t-X_s|\right]=0$$ and uniform convergence
follows.

It remains to argue that for any $n$ the function $R_n:[0,\infty)\rightarrow [0,\infty)$ is convex and continuous. Fix $n\in\mathbb N$.
For any $k=1,...,n$ let
 $g_k:(0,\infty)\rightarrow (0,\infty)$ be such that $X_{k/n}=g_k(S_{k/n})$.
Introduce the functions
$\hat B_k:[0,\infty)\times (0,\infty)\rightarrow\mathbb R$, $k=0,1,...,n$ by
$$\hat B_n(z,y):=(g_n(y)-z)^{+},$$
for $k=1,...,n-1$ by the recursive relations
$$\hat B_k(z,y)=\min \left(\left(g_k(y)-z\right)^{+},
\inf_{\Theta_{k+1}\in \mathcal H_{0,1/n}(z)}\mathbb E_{\mathbb P}\left[\hat B_{k+1}(\Theta_{k+1},y S_{1/n})\right]\right),
$$
and for $k=0$
\begin{equation}\label{3.new}
\hat B_0(z,y)=
\inf_{\Theta_{1}\in \mathcal H_{0,1/n}(z)}\mathbb E_{\mathbb P}\left[\hat B_{1}(\Theta_{1},y S_{1/n})\right].
\end{equation}
Observe that $R_n(\cdot)$ is "almost" as $R_{\mathbb T}(\cdot)$ defined in Section 2 for
 the set $\mathbb T:=\{0,1/n,2/n,...,1\}$, the only difference is that for $R_n(x)$ stopping at zero is not allowed.
 This is why in (\ref{3.new}) we do not take minimum with $(g_0(y)-z)^{+}$.
 Using similar arguments as in the proof of Theorem \ref{thm2.1} we obtain that
$R_n(x)=\hat B_0(x,S_0)$.
 Finally, from Corollary \ref{cor1}(ii) we get that for any $y$,
 $\hat B_0(\cdot,y)$ is convex and continuous. This completes the proof.
 \end{proof}

Next, we
observe that for any stopping time $\sigma\in\mathcal T$ and $\lambda>0$
\begin{equation}\label{3.main}
\inf_{\Upsilon\geq 0} \left[(X_{\sigma}-\Upsilon)^{+}+\lambda Z_{\sigma} \Upsilon\right]=X_{\sigma}\min(1,\lambda Z_{\sigma}).
\end{equation}
This brings us to introducing the
function
\begin{equation}\label{3.dual}
F(\lambda)=\inf_{\sigma\in\mathcal T}\mathbb E_{\mathbb P}\left[X_{\sigma}\min(1,\lambda Z_{\sigma})\right], \ \ \lambda>0.
\end{equation}
Obviously $F:(0,\infty)\rightarrow [0,\infty)$ is concave and nondecreasing.
Inspired by Corollary 8.3 in \cite{KW} we prove the following.
\begin{lemma}\label{lem3.2}
${}$\\
(i). For any $x\geq 0$ and $\lambda>0$,
 $$R(x)\geq F(\lambda)-\lambda x.$$
(ii). Let $\lambda>0$ be such that $F$ is
differentiable at $\lambda$. Then for $x=F'(\lambda)$ we have the equality
$$R(x)=F(\lambda)-\lambda x.$$
\end{lemma}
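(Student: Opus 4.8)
The plan is to establish (i) by a routine weak--duality bound and (ii) by exhibiting an almost dual--optimal hedge and then identifying its initial capital with $F'(\lambda)$; I expect the last identification to be the delicate point.

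For (i), fix $\sigma\in\mathcal T$, $\pi\in\mathcal A(x)$ and $\lambda>0$ and write $\mathbb E_{\mathbb P}[(X_\sigma-V^\pi_\sigma)^{+}]=\mathbb E_{\mathbb P}[(X_\sigma-V^\pi_\sigma)^{+}+\lambda Z_\sigma V^\pi_\sigma]-\lambda\,\mathbb E_{\mathbb P}[Z_\sigma V^\pi_\sigma]$. In the first term the admissibility condition $V^\pi_\sigma\geq 0$ and the pointwise identity (\ref{3.main}) give that the integrand is at least $X_\sigma\min(1,\lambda Z_\sigma)$, hence that term is at least $F(\lambda)$ by (\ref{3.dual}). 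In the second term, since $V^\pi$ is a nonnegative $\mathbb Q$--supermartingale and $\sigma\leq 1$, optional stopping yields $\mathbb E_{\mathbb P}[Z_\sigma V^\pi_\sigma]=\mathbb E_{\mathbb Q}[V^\pi_\sigma]\leq V^\pi_0=x$. Therefore $\mathbb E_{\mathbb P}[(X_\sigma-V^\pi_\sigma)^{+}]\geq F(\lambda)-\lambda x$, and taking the infimum over $\sigma$ and $\pi$ proves (i).

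For (ii), by (i) it is enough, given $x=F'(\lambda)$, to produce a hedge of initial capital $x$ with shortfall risk at most $F(\lambda)-\lambda x$. Since $t\mapsto X_t\min(1,\lambda Z_t)$ is continuous and dominated by the integrable $\sup_{0\leq t\leq 1}X_t$, the general theory of optimal stopping supplies $\sigma^{*}\in\mathcal T$ attaining the infimum in (\ref{3.dual}). Let $\Upsilon^{*}:=X_{\sigma^{*}}\mathbb I_{\lambda Z_{\sigma^{*}}<1}$, which is the pointwise minimizer in (\ref{3.main}), so that $(X_{\sigma^{*}}-\Upsilon^{*})^{+}+\lambda Z_{\sigma^{*}}\Upsilon^{*}=X_{\sigma^{*}}\min(1,\lambda Z_{\sigma^{*}})$, and set $\hat x:=\mathbb E_{\mathbb P}[Z_{\sigma^{*}}\Upsilon^{*}]$, which is finite because $Z_{\sigma^{*}}\leq 1/\lambda$ on $\{\lambda Z_{\sigma^{*}}<1\}$ and $X_{\sigma^{*}}\leq\sup_t X_t$. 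The random variable $\Upsilon^{*}$ is nonnegative, $\mathcal F_1$--measurable and $\mathbb Q$--integrable with $\mathbb E_{\mathbb Q}[\Upsilon^{*}]=\hat x$ (by optional sampling for the $\mathbb P$--martingale $Z$); hence the $\mathbb Q$--martingale $M_t:=\mathbb E_{\mathbb Q}[\Upsilon^{*}|\mathcal F_t]$ is nonnegative with $M_0=\hat x$, and by the martingale representation theorem there is $\pi^{*}\in\mathcal A(\hat x)$ with $V^{\pi^{*}}_t=M_t$; optional sampling together with the $\mathcal F_{\sigma^{*}}$--measurability of $\Upsilon^{*}$ gives $V^{\pi^{*}}_{\sigma^{*}}=\Upsilon^{*}$. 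Consequently, using the optimality of $\sigma^{*}$,
\[
R(\hat x)\leq R(\pi^{*},\sigma^{*})=\mathbb E_{\mathbb P}\left[(X_{\sigma^{*}}-\Upsilon^{*})^{+}\right]=\mathbb E_{\mathbb P}\left[X_{\sigma^{*}}\min(1,\lambda Z_{\sigma^{*}})\right]-\lambda\hat x=F(\lambda)-\lambda\hat x,
\]
and combining with (i) forces $R(\hat x)=F(\lambda)-\lambda\hat x$.

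It then remains to show $\hat x=F'(\lambda)$. Set $\phi(\mu):=\mathbb E_{\mathbb P}[X_{\sigma^{*}}\min(1,\mu Z_{\sigma^{*}})]$; this is a finite concave function with $\phi\geq F$ on $(0,\infty)$ and $\phi(\lambda)=F(\lambda)$. Comparing one--sided difference quotients of $\phi$ and $F$ at $\lambda$ gives $\phi'_{-}(\lambda)\leq F'(\lambda)\leq\phi'_{+}(\lambda)$, while concavity of $\phi$ gives $\phi'_{+}(\lambda)\leq\phi'_{-}(\lambda)$; since $F$ is differentiable at $\lambda$ these collapse, so $\phi$ is differentiable at $\lambda$ with $\phi'(\lambda)=F'(\lambda)$. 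Differentiating under the expectation --- the difference quotients of $\mu\mapsto X_{\sigma^{*}}\min(1,\mu Z_{\sigma^{*}})$ near $\lambda$ being dominated by an integrable random variable --- yields $\phi'(\lambda)=\mathbb E_{\mathbb P}[X_{\sigma^{*}}Z_{\sigma^{*}}\mathbb I_{\lambda Z_{\sigma^{*}}<1}]=\hat x$, the equality of the one--sided derivatives making the contribution of $\{\lambda Z_{\sigma^{*}}=1\}$ vanish. Hence $x=F'(\lambda)=\hat x$ and $R(x)=F(\lambda)-\lambda x$. The main obstacle is precisely this last identification: it is where the differentiability hypothesis is used, and it requires care with the kink of $\mu\mapsto\min(1,\mu z)$ and with differentiation under the integral; keeping track of integrability (of $\sup_t X_t$ under $\mathbb P$, and of the relevant quantities under $\mathbb Q$) is a further minor nuisance.
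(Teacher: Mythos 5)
Your proof is correct and follows essentially the same route as the paper: weak duality via the supermartingale property and (\ref{3.main}) for part (i), and for part (ii) the construction $\Upsilon_\lambda=X_{\sigma_\lambda}\mathbb I_{Z_{\sigma_\lambda}<1/\lambda}$ from an optimal stopping time of (\ref{3.dual}), replication by completeness, and identification of $\mathbb E_{\mathbb Q}[\Upsilon_\lambda]$ with $F'(\lambda)$. The only divergence is in that last identification: the paper compares the difference quotients of $F$ directly with those of the affine majorant $\tilde\lambda\mapsto\mathbb E_{\mathbb P}\left[(X_{\sigma_\lambda}-\Upsilon_\lambda)^{+}+\tilde\lambda Z_{\sigma_\lambda}\Upsilon_\lambda\right]$, which touches $F$ at $\lambda$ and hence has slope $F'(\lambda)$ there, whereas you route through the concave function $\phi$ and differentiate under the expectation --- a valid but slightly heavier argument, since the affine--majorant comparison sidesteps the kink of $\mu\mapsto\min(1,\mu z)$ and any appeal to dominated convergence.
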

\begin{proof}
${}$\\
(i). Let $x\geq 0$ and $\lambda>0$. Choose arbitrary $(\pi,\sigma)\in\mathcal A(x)\in\mathcal T$. Then, from the super--martingale property
of an \textit{admissible} portfolio we have
\begin{equation}\label{3.admis}
x=V^{\pi}_0\geq\mathbb E_{\mathbb Q}[V^{\pi}_{\sigma}]=\mathbb E_{\mathbb P}[Z_{\sigma}V^{\pi}_{\sigma}].
\end{equation}
This together with (\ref{3.main}) gives
$$R(\pi,\sigma)+\lambda x \geq\mathbb E_{\mathbb P}\left[(X_{\sigma}-V^{\pi}_{\sigma})^{+}+\lambda Z_{\sigma} V^{\pi}_{\sigma}\right]\geq
F(\lambda).$$
Since $(\pi,\sigma)\in\mathcal A(x)\in\mathcal T$ was arbitrary we complete the proof.\\
(ii). In view of (i), it is sufficient to show that
$R(x)\leq F(\lambda)-\lambda x$.
Let $\sigma_{\lambda}\in\mathcal T$ be an optimal stopping time in (\ref{3.dual}), i.e.
\begin{equation}\label{3.5}
F(\lambda)=\mathbb E_{\mathbb P}\left[X_{\sigma_{\lambda}}\min(1,\lambda Z_{\sigma_{\lambda}})\right].
\end{equation}
Such stopping time exists because the process $\{X_{t}\min(1,\lambda Z_{t})\}_{t=0}^1$ is continuous.
Set $\Upsilon_{\lambda}=X_{\sigma_{\lambda}}\mathbb I_{Z_{\sigma_{\lambda}}<1/\lambda}$.
From (\ref{3.main}) it follows that for any $\tilde\lambda>0$
$$
F(\tilde\lambda)\leq \mathbb E_{\mathbb P}\left[(X_{\sigma_{\lambda}}-\Upsilon_{\lambda})^{+}+\tilde\lambda Z_{\sigma_{\lambda}} \Upsilon_{\lambda}\right].$$
On the other hand from (\ref{3.5})
$$
F(\lambda)= \mathbb E_{\mathbb P}\left[(X_{\sigma_{\lambda}}-\Upsilon_{\lambda})^{+}+\lambda Z_{\sigma_{\lambda}} \Upsilon_{\lambda}\right].$$
Thus,
\begin{eqnarray*}
&\frac{F(\tilde\lambda)-F(\lambda)}{\tilde\lambda-\lambda}\leq \mathbb E_{\mathbb P}\left[ Z_{\sigma_{\lambda}} \Upsilon_{\lambda}\right], \ \ \mbox{for} \ \ \tilde\lambda>\lambda\\
&\mbox{and} \ \ \frac{F(\tilde\lambda)-F(\lambda)}{\tilde\lambda-\lambda}\geq \mathbb E_{\mathbb P}\left[ Z_{\sigma_{\lambda}} \Upsilon_{\lambda}\right]\ \ \mbox{for} \ \ \tilde\lambda<\lambda.
\end{eqnarray*}
From the fact that $F'(\lambda)=x$ we conclude that
$$x=\mathbb E_{\mathbb P}\left[ Z_{\sigma_{\lambda}} \Upsilon_{\lambda}\right]=\mathbb E_{\mathbb Q}\left[\Upsilon_{\lambda}\right].$$
The completeness of the BS model implies that there exists $\pi\in\mathcal A(x)$ such that
$V^{\pi}_{\sigma_{\lambda}}=\Upsilon_{\lambda}$. From (\ref{3.5})
we get
\begin{eqnarray*}
&R(x)+\lambda x\leq R(\pi,\sigma_{\lambda})+\lambda x=\mathbb E_{\mathbb P}\left[(X_{\sigma_{\lambda}}-\Upsilon_{\lambda})^{+}+\lambda Z_{\sigma_{\lambda}} \Upsilon_{\lambda}\right]\\
&=\mathbb E_{\mathbb P}\left[X_{\sigma_{\lambda}}\min(1,\lambda Z_{\sigma_{\lambda}})\right]=F(\lambda)
\end{eqnarray*}
as required.
\end{proof}
While Lemma \ref{lem3.1}--\ref{lem3.2} are quite general, the following lemma uses the explicit structure of the payoff process $\{X_t\}_{t=0}^1$.
\begin{lemma}\label{lem3.2+}
${}$\\
(i).
For any $\lambda\geq 2$, $F(\lambda)=1$. \\
(ii). The derivative of $F$ from the left (exists because $F$ is concave) satisfies
$F'_{-}(2)\geq \nu$ where
$\nu$ is given by (\ref{3.nu}).
\end{lemma}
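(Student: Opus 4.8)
The plan is to compute $F$ near $\lambda=2$ essentially by hand, using the explicit form of $X$.

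\emph{Part (i).} First I would record the pointwise identity
$$X_t\,\min(1,2Z_t)=(1+\sin(\pi t))Z_t,\qquad t\in[0,1],$$
which is checked by splitting into the cases $Z_t\ge 1/2$ and $Z_t<1/2$: if $Z_t\ge 1/2$ then $\min(1,2Z_t)=1$ and $X_t=(1+\sin(\pi t))Z_t$, while if $Z_t<1/2$ then $\min(1,2Z_t)=2Z_t$ and $X_t=\tfrac12(1+\sin(\pi t))$, and the two products agree. Hence $F(2)=\inf_{\sigma\in\mathcal T}\mathbb E_{\mathbb P}[(1+\sin(\pi\sigma))Z_\sigma]$. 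Taking $\sigma\equiv 0$ (and using $Z_0=1$) gives $F(2)\le 1$. For the reverse bound, $\sigma$ takes values in $[0,1]$, so $\sin(\pi\sigma)\ge 0$ and $(1+\sin(\pi\sigma))Z_\sigma\ge Z_\sigma\ge 0$; since $\{Z_t\}_{t\in[0,1]}$ is a uniformly integrable martingale (closed by $Z_1$) and $\sigma\le 1$, optional sampling gives $\mathbb E_{\mathbb P}[Z_\sigma]=Z_0=1$, so $F(2)\ge 1$. Finally, for general $\lambda\ge 2$ monotonicity of $F$ gives $F(\lambda)\ge F(2)=1$, and $\sigma\equiv 0$ gives $F(\lambda)\le\mathbb E_{\mathbb P}[X_0\min(1,\lambda)]=\min(1,\lambda)=1$ (since $X_0=1$), so $F(\lambda)=1$. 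Integrability throughout is immediate from $0\le X_\sigma\min(1,\lambda Z_\sigma)\le\sup_{0\le t\le 1}X_t\in L^1(\mathbb P)$.

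\emph{Part (ii).} Since $F$ is concave the one-sided limit $F'_-(2)=\lim_{h\downarrow 0}\frac{F(2)-F(2-h)}{h}$ exists, so by part (i) it suffices to show that $\frac{1-F(2-h)}{h}\ge\nu$ for every $h\in(0,2)$, i.e. $F(2-h)\le 1-\nu h$. To bound $F(2-h)$ from above I would use the deterministic stopping time $\sigma\equiv 1$; since $X_1=(1+\sin\pi)\max(Z_1,1/2)=\max(Z_1,1/2)$ this gives
$$F(2-h)\le \mathbb E_{\mathbb P}\!\left[\max(Z_1,1/2)\,\min(1,(2-h)Z_1)\right].$$

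It then remains to evaluate $D_h:=Z_1-\max(Z_1,1/2)\min(1,(2-h)Z_1)$ by cases on $Z_1$. On $\{Z_1<1/2\}$, $(2-h)Z_1<1$, so $D_h=Z_1-\tfrac12(2-h)Z_1=\tfrac h2 Z_1$; on $\{1/2\le Z_1<1/(2-h)\}$, $D_h=Z_1-(2-h)Z_1^2=Z_1\bigl(1-(2-h)Z_1\bigr)\ge 0$; and on $\{Z_1\ge 1/(2-h)\}$, $D_h=0$. Since $\mathbb E_{\mathbb P}[Z_1]=1$ and everything is integrable, taking expectations yields
$$1-F(2-h)\ \ge\ \mathbb E_{\mathbb P}[D_h]\ \ge\ \tfrac h2\,\mathbb E_{\mathbb P}\!\left[Z_1\mathbb{I}_{Z_1<1/2}\right]\ =\ \nu h,$$
and letting $h\downarrow 0$ gives $F'_-(2)\ge\nu$.

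The computations are elementary, so I do not anticipate a real obstacle; the points requiring care are the two case distinctions — in particular noticing that $X_t\min(1,2Z_t)$ reduces to the \emph{same} expression $(1+\sin(\pi t))Z_t$ in both regimes — together with the direction of the one-sided derivative and the routine integrability checks, which rest on $\mathbb E_{\mathbb P}[\sup_{0\le t\le 1}X_t]<\infty$ and $\mathbb E_{\mathbb P}[Z_1]=1$.
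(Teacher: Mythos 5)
Your proof is correct and follows essentially the same route as the paper: part (i) rests on the identity $\max(z,1/2)\min(1,2z)\equiv z$ together with $\mathbb E_{\mathbb P}[Z_\sigma]=1$ and the test point $\sigma\equiv 0$, and part (ii) bounds $F(\lambda)$ for $\lambda<2$ from above via $\sigma\equiv 1$ and the estimate $\max(Z_1,1/2)\min(1,\lambda Z_1)\le Z_1\mathbb I_{Z_1\ge 1/2}+\tfrac{\lambda}{2}Z_1\mathbb I_{Z_1<1/2}$, which is exactly your case analysis of $D_h$. The only cosmetic difference is that you parametrize by $h=2-\lambda$ and spell out the middle case $1/2\le Z_1<1/(2-h)$ explicitly.
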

\begin{proof}
${}$\\
(i).
Let $\lambda\geq 2$. Obviously, $\mathbb E_{\mathbb P}[Z_{\sigma}]=1$ for all $\sigma\in\mathcal T$. Hence, from the simple formula
$\max(z,1/2)\min(1,2 z)\equiv z$ we obtain
$$F(\lambda)\geq F(2)=\inf_{\sigma\in\mathcal T}\mathbb E_{\mathbb P}\left[Z_{\sigma}(1+\sin(\pi\sigma))\right]\geq 1.$$
On the other hand, taking $\sigma\equiv 0$ in (\ref{3.dual}), we get
$F(\lambda)\leq 1$ and so $F\equiv 1$ on the interval $[2,\infty)$.\\
(ii).
Choose $\lambda<2$. Clearly, (we take $\sigma\equiv 1$ in (\ref{3.dual}))
\begin{eqnarray*}
&F(\lambda)\leq\mathbb E_{\mathbb P}\left[\max(Z_1,1/2)\min(1,\lambda Z_{1})\right]\\
&\leq\mathbb E_{\mathbb P}\left[Z_1\mathbb I_{Z_1>1/2}+\frac{\lambda}{2}Z_1\mathbb I_{Z_1<1/2}\right]=1-\frac{2-\lambda}{2}
\mathbb E_{\mathbb P}\left[Z_1\mathbb I_{Z_1<1/2}\right].
\end{eqnarray*}
This together with the equality $F(2)=1$ gives
$F'_{-}(2)\geq \nu.$
 \end{proof}

Now, we have all the ingredients for the proof of Theorem \ref{thm3.1}.
\begin{proof}
From Lemma \ref{lem3.2}(i) and Lemma \ref{lem3.2+}(i) it follows that for any $x$
$$R(x)\geq F(2)-2 x=1-2x.$$
Let us prove that
\begin{equation}\label{3.100}
R(x)= 1-2x, \ \ \forall x\leq F'_{-}(2).
\end{equation}
Since $R$ is convex (Lemma \ref{lem3.1})
then it is sufficient to show that
$R(0)\leq 1$ and
$R(F'_{-}(2))\leq 1-2 F'_{-}(2)$.

The first inequality is trivial, $R(0)\leq X_0=1$.
Let us show the second inequality.
The concavity of $F$ implies that
there exists a sequence $\lambda_n\uparrow 2$ such that for any $n$ the derivative $F'(\lambda_n)$ exists.
Hence, from the continuity of $R$ (Lemma \ref{lem3.1}), the concavity of $F$ and Lemma \ref{lem3.2}(ii) we obtain
$$R(F'_{-}(2))=\lim_{n\rightarrow\infty}R(F'(\lambda_n))=\lim_{n\rightarrow\infty}[F(\lambda_n)-\lambda_n F'(\lambda_n)]=1-2 F'_{-}(2)$$
and (\ref{3.100}) follows.

Next, let $x\in (0,\nu)$. Assume by contradiction that there exists a hedging strategy $(\pi,\sigma)\in\mathcal A(x)\times\mathcal T$ such that
$R(\pi,\sigma)=R(x)$. From Lemma \ref{lem3.2+}(ii) and (\ref{3.100})
we obtain
\begin{equation}\label{3.101}
R(\pi,\sigma)=1-2x.
\end{equation}
Observe that if $\sigma$ takes on values (with positive probability) in the interval $(0,1)$ then
$$\mathbb E_{\mathbb P}\left[X_{\sigma}\min(1,2 Z_{\sigma})\right]=\mathbb E_{\mathbb P}\left[Z_{\sigma}(1+\sin(\pi\sigma))\right]>
\mathbb E_{\mathbb P}[Z_{\sigma}]=1.$$
 Thus, from (\ref{3.main}) and (\ref{3.admis})
 $$R(\pi,\sigma)+2  x \geq\mathbb E_{\mathbb P}\left[(X_{\sigma}-V^{\pi}_{\sigma})^{+}+2 Z_{\sigma} V^{\pi}_{\sigma}\right]>1
 $$
which is a contradiction to (\ref{3.101}).
On the other hand if $\sigma \equiv 0$ then
$$R(\pi,\sigma)=X_0-x=1-x,$$
also a contradiction to (\ref{3.101}).

We conclude that the only remaining possibility is $\sigma\equiv 1$. Let us show that there is a contradiction in this case as well.
Introduce the event $$A:=\{\max(Z_1,V^{\pi}_1)<1/2\}.$$
Observe that on the event $A$ we have
\begin{equation}\label{3.final}
(X_{1}-V^{\pi}_1)^{+}+2 Z_{1} V^{\pi}_1=(1/2-Z_1)(1-2V^{\pi}_1)+Z_1 > Z_1=X_1\min(1,2Z_1).
\end{equation}
From (\ref{3.admis}) and the fact that $x<\nu$ it follows that
$$\mathbb E_{\mathbb P}[Z_1V^{\pi}_1]<\nu=\frac{1}{2} \mathbb E_{\mathbb P}\left[Z_1\mathbb I_{Z_1<1/2}\right].$$
This together with the inequality $V^{\pi}_1\geq 0$ gives $\mathbb P(A)>0$.
Thus,
by combining (\ref{3.main}), (\ref{3.admis}) and (\ref{3.final}) we obtain
\begin{eqnarray*}
&R(\pi,\sigma)+2 x=\mathbb E_{\mathbb P}\left[(X_{1}-V^{\pi}_1)^{+}+2 Z_{1} V^{\pi}_1\right]\nonumber\\
&>\mathbb E_{\mathbb P}\left[X_{1}\min(1,2 Z_{1})\right]=\mathbb E_{\mathbb P}[Z_1]=1
\end{eqnarray*}
which is a contradiction to (\ref{3.101}).
\end{proof}
\begin{remark}
The message of Theorem \ref{thm3.1} is that the $\inf$ in (\ref{1.2}) which ruins the convexity of the shortfall risk functional
$R(\pi)$ can lead to non existence of an optimal strategy.
Observe that in the above constructed
example, the payoff process $X$ is continuous 
and the payoff process
$Y$ has a positive jump in the maturity date.

One can ask, what if we require
that both of the payoff processes $X$ and $Y$ will be continuous, is there a counter example in this case as well?

The answer is yes. Let us 
apply Theorem \ref{thm3.1} in order to construct a counter example with continuous payoffs.

Consider a simple BS financial market with time horizon $T=2$
which
consists of a riskless savings account bearing zero interest and of a risky
asset $S$, whose value at time $t$ is given by
\begin{eqnarray*}
&S_t=S_0\exp\left(\kappa W_t+(\vartheta-\kappa^2/2) t\right), \ \ t\in [0,1]\\
&S_t=S_1\exp\left(\kappa (W_t-W_1)-\kappa^2(t-1)/2\right), \ \  t\in (1,2]
\end{eqnarray*}
where, as before,  $S_0,\kappa>0$ and $\vartheta\neq 0$ are constants.
Namely, this is a BS model which has a drift jump in $t=1$. Obviously this market is complete and the unique martingale measure is given by
$\frac{d\mathbb Q}{d\mathbb P} {|\mathcal F_t}:=Z_{t\wedge 1}$
where $Z_t$ is given by (\ref{2.1}). Consider a game option with the continuous payoffs
\begin{eqnarray*}
&\hat X_t=(1+\sin (\pi t))\max(Z_t,1/2), \ \ t\in [0,1]\\
&\hat X_t=\hat X_1,  \ \ t\in (1,2],\\
&\hat Y_t=0, \ \  t\in [0,1]\\
&\hat Y_t= (t-1) \hat X_1 , \ \ t\in (1,2].
\end{eqnarray*}
Denote by $\hat R$ the corresponding shortfall risk.
We argue that for an initial capital $0<x<\nu:=\frac{1}{2} \mathbb E_{\mathbb P}\left[Z_1\mathbb I_{Z_1<1/2}\right]$ 
there is
no optimal hedging strategy.   

Indeed, let $\pi$ be an \textit{admissible} portfolio strategy and $\sigma$ be a stopping time
with values in the interval $[0,2]$.
From the super--martingale property of the portfolio value and the fact that $Z$ is a constant random variable
after $t=1$ we obtain
 $$V^{\pi}_{\sigma\wedge 1}\geq\mathbb E_{\mathbb P}[V^{\pi}_{\sigma}|\mathcal F_1].$$
 This together with the Jensen inequality and the fact that $\hat X$ is a constant random variable after $t=1$ gives
 \begin{equation}\label{3.rem}
 \mathbb E_{\mathbb P}\left[(\hat X_{\sigma\wedge 1}-V^{\pi}_{\sigma\wedge 1})^{+}\right]\leq
 \mathbb E_{\mathbb P}\left[\mathbb E_{\mathbb P}\left[(\hat X_{\sigma}-V^{\pi}_{\sigma})^{+}|\mathcal F_1\right]\right]
 =\mathbb E_{\mathbb P}\left[(\hat X_{\sigma}-V^{\pi}_{\sigma})^{+}\right].
 \end{equation}
  From (\ref{3.rem}), and the relations 
$\hat Y_{[0,1]}\equiv 0$, $\hat Y_2=\hat X_2$ we obtain 
$$
\hat R(\pi,\sigma\wedge 1)=\mathbb E_{\mathbb P}\left[(\hat X_{\sigma\wedge 1}-V^{\pi}_{\sigma\wedge 1})^{+}\right]
\leq \mathbb E_{\mathbb P}\left[(\hat X_{\sigma}-V^{\pi}_{\sigma})^{+}\right]\leq \hat R(\pi,\sigma).
$$
Namely, we can restrict the investor to stopping times in the interval $[0,1]$, 
but this is exactly the setup that was studied in Theorem \ref{thm3.1}.
From Theorem \ref{thm3.1} we conclude that there is no optimal hedging strategy for $x\in (0,\nu).$
\end{remark}
\section*{Acknowledgments}
I would like to thank Yuri Kifer
for introducing me to the problems which are treated
in this paper and also for related fruitful discussions.
I also would like to thank Walter Schachermayer for sharing
some ideas a while ago which turned out to be helpful for proving Theorem \ref{thm2.1}.

{}

\end{document}